\newcommand{\<}{\langle}
\renewcommand{\>}{\rangle}
\newcommand{\F}{\mathcal{F}}
\newcommand{\C}{\mathcal{C}}
\newcommand{\Hc}{\mathcal{H}}
\newcommand{\K}{\mathcal{K}}
\newcommand{\mR}{\mathbb{R}}
\newcommand{\1}{{\bf 1}}
\newcommand{\clc}{\overline{V_+}}
\newcommand{\lp}{\left(}
\newcommand{\rp}{\right)}
\newcommand{\as}{\mathrm{as}}
\newcommand{\out}{\mathrm{out}}
\newcommand{\al}{\alpha}
\newcommand{\ep}{\epsilon}
\newcommand{\la}{\lambda}
\newcommand{\La}{\Lambda}
\newcommand{\dsp}{\displaystyle}
\newcommand{\vep}{\varepsilon}
\newcommand{\vp}{\varphi}
\newcommand{\ti}{\widetilde}
\newcommand{\w}{\omega}
\newcommand{\W}{\Omega}
\newcommand{\ov}{\overline}
\newcommand{\wh}{\widehat}
\newcommand{\wch}{\widecheck}
\newcommand{\p}{\partial}
\newcommand{\con}{\mathrm{const}}
\newcommand{\schi}{\chi^\diamond}
\DeclareMathOperator{\supp}{supp}
\DeclareMathOperator{\id}{id}
\newtheorem{thm}{Theorem}
\newtheorem{pr}[thm]{Proposition}
\newtheorem{lem}[thm]{Lemma}
\newtheorem{col}[thm]{Corollary}
\newtheorem{ass}{Assumption}
\newtheorem{dfn}{Definition}
\title{Infraparticle problem, asymptotic fields\\ and Haag-Ruelle theory}
\author{Andrzej Herdegen\thanks{e-mail: herdegen@th.if.uj.edu.pl}\\
{\it Institute of Physics, Jagiellonian University,}\\
{\it Reymonta 4, 30-059 Kraków, Poland}}
\date{}
\begin{document}

\maketitle

\begin{abstract}
In this article we want to argue that an appropriate generalization of the Wigner concepts may lead to an asymptotic particle with well-defined mass, although no mass hyperboloid in the energy-momentum spectrum exists.

\end{abstract}

\section{Introduction}\label{int}

It is a well-established fact, that electrically charged particles do not produce a discrete mass hyperboloid in the mass spectrum of the quantum theory in which they participate. The evidence comes both from the perturbational QED, as well as from more fundamental arguments \cite{bu86}. Therefore, the Wigner concept of an elementary particle as a carrier of an irreducible unitary representation of (the universal covering group of) the restricted Poincar\'e group does not apply to these particles. Also, the absence of a discrete mass hyperboloid has posed considerable difficulties in obtaining a manageable scattering theory for such ``infraparticles'', as they were named long ago \cite{sc63}. Much effort has been devoted to the construction of asymptotic charged states as vectors in the Hilbert space of various models by ``dressing'' a charged particle with a~``cloud of radiation''; some recent examples include \cite{ch10}, \cite{ch09}, where also further bibliographic information may be found.  Nevertheless, it seems that a~convincing characterization of particle-like charged states in relativistic QFT has not been reached yet.

Another approach developed in \cite{bu91}, \cite{po04}, aims at an alternative, with respect to the Wigner concept, general characterization of a particle. This approach is based on a study, within the algebraic approach to QFT, of the effect of localization of a particle in the process of measurement. The resulting theory of asymptotic functionals and particle weights may be viewed as an extension of the Dirac notion of a particle as an improper eigenstate of the energy-momentum vector. However, there is one essential point of departure from the usual quantum-mechanical notion of improper eigenstates: weights with different four-momentum characteristics do not interfere. The scattering theory must then be based on scattering cross-sections only, rather than amplitudes. Is this indeed the price to be paid for inclusion of charged particles into the scattering theory? We believe not, and below want to try out an alternative.

The obvious source of all the aforementioned difficulties is the fact that a charged particle carries the Coulomb field, which extends to spatial infinity. On the other hand, QFT relies heavily on the idea of locality. Now, these two ingredients are hard to be brought to a peaceful coexistence. Locality implies that  electromagnetic fields of Coulomb-like decay produce flux at infinity which is superselected -- it commutes with all local observables. On the other hand charges with differing velocities produce different fluxes. To stay within one superselection sector one has to ``dress'' particles with free infrared-singular ``clouds'' of radiation which compensate changes of flux. There are arguments within local algebraic approach to QFT \cite{bu82} that by this procedure one can also force the causal support of the charged particle into a~spacelike cone, which means that the particle (together with the surrounding electromagnetic field) is created by an operation supported in this region. It is believed that it should also be possible to base an analysis of quantum statistics on these localization properties.

Another side of the problems we are here concerned with is the question: ``in front of'' what background charged particles appear? That is to say, what is the background representation space of radiation to which operations creating charged particles are applied? The most standard answer is: the vacuum representation. However, it seems that the use of other, infrared singular, representations of radiation may have advantages over the vacuum representation. Such ``infravacuum'' representations have been investigated since 1970'. In fact, one of the main contributors in this field expressed the hope that charged particles could be ` ``ordinary'' particles, but moving in an ``infravacuum'' ' \cite{kr83}. It seems that this is not possible, in the sense of the existence of a discrete mass hyperboloid in the energy-momentum spectrum, as shown by the analysis of the Gauss law \cite{bu86}. Nevertheless, the following question is still valid: how much of the infrared structure may be transferred from the particle to the background electromagnetic field? In this article we want to investigate how much of the local structure of fields may be abandoned, to still have a reasonable notion of an asymptotic field, in absence of discrete masses in the energy-momentum spectrum  and of the vacuum state in the representation space. We hope the scheme will have relevance for electrically charged particles, but we think it is of interest irrespective of the answer to this question.

In Section \ref{asymp} we propose to consider the possibility that there exists a~class of charge-creating fields whose (anti-)commutators decay (rather mild\-ly) in spacelike directions, but no further assumptions on their locality with respect to observables like electromagnetic field are needed. We then use a~modified Haag-Ruelle type of the definition of an asymptotic field. However, our approach is based on direct averaging on hyperboloids $x\cdot x=\la^2$, $x^0>0$, without the use of solutions of the Klein-Gordon equation. This method, which was introduced in \cite{he95} (at the level of first-quantized Dirac field), has the advantage of manifest Lorentz invariance. In Section \ref{foutr} we obtain relation of such fields to the fields of the form
\begin{equation}\label{intas}
 \int\wch{\Psi}(p)\wh{\vp}(p)e^{i\big(\sqrt{p^2}-m\big)\la}\,d^4p\,,
\end{equation}
where $\Psi(x)$ is a spacetime translation of the quantum field operator $\Psi$. Our tentative definition of an infraparticle of mass $m$ involves the condition that the above operator converges weakly to a non-zero operator (between states of finite energy), when $\la\to\infty$, for $\wh{\vp}$ supported in some neighborhood of the mass hyperboloid. Using the language of a recent discussion by Dybalski \cite{dy10} of the spectrum of automorphism groups of an algebra, this should amount to a~singular continuous component of the spectrum concentrated on the mass hyperboloid (but we do not go into further discussion of this). One obtains then creation/annihilation operators of an asymptotic field, which transfer energy-momentum lying strictly on the mass hyperboloid. With some strengthening\pagebreak[3] of assumptions one also obtains bosonic/fermionic statistics. We use some of the techniques of the Haag-Ruelle scattering theory \mbox{\cite{ha58} - \cite{dy05}}, but our context is more general.

We remark that construction similar to \eqref{intas} was used in \cite{fr96} to obtain a~Haag-Ruelle-type theory for plektons (in $2+1$ dimensions). However, in the present paper the fields defined by \eqref{intas} are not a starting point, but rather they appear as a result of asymptotic expansion.

What are physically motivated, general sufficient conditions for a theory to admit non-zero operator limits defined above is an important problem for future research. However, an important part of our motivation is the existence of a model which provides an adequate context for the above ideas.
We expect that the algebraic model proposed in \cite{he98} and further developed in \cite{he08}, \cite{he11} is a good candidate for algebra of asymptotic fields in QED. The matter and electromagnetic fields of the model are as far decoupled, as the long-range structure allows: remnant correlations manifest themselves in non-commutation of these fields, and Gauss' law is respected. In particular, the charged field of the model would be then expected to be the result of a~limiting procedure of the type mentioned above, so repeated limiting should satisfy the above structure. We indicate in Section \ref{model} that this is indeed the case. Let us also mention here that the electromagnetic field of the model should also be the result of some asymptotic limiting. However, it seems that the procedures discussed for such purpose in \cite{bu77} and \cite{bu82'} fall short in this case. The reason is twofold: (i) the space of test fields of the electromagnetic field of the model is substantially larger than usual, including a class of non-Schwartz smearing functions, and (ii) there are indications \cite{he11} that the model does not admit spacelike-cone localization of charged electromagnetic fields, which prevents application of the technique used in \cite{bu82'}. We do not address these problems in this article.

The Haag-Ruelle scattering theory assumes the existence of the vacuum vector state and a discrete mass hyperboloid. Our relaxed assumptions on the spacelike decay of (anti-)commutators and our definition of smearing may be also used in this case. In Section \ref{hrtheory} we show that this leads to an extension of the applicability, with a simultaneous sharpening of the results of this formalism. An extension of the Haag-Ruelle theory to quasi-local fields was given earlier in \cite{fa73}, but on much more restrictive basis: existence of mass gap in the energy-momentum spectrum and fast decrease of nonlocalities. A later discussion by one of the authors \cite{sc10}, with more general assumptions, refers to the scattering theory in the spirit of Buchholz \cite{bu91} (see remarks above), rather than Haag-Ruelle.

\pagebreak

Appendix \ref{regwave} contains some more sharp, than usually discussed, results on regular wave-packets satisfying Klein-Gordon  equation. These properties are needed in Section \ref{foutr}. In Appendix \ref{deccom} we state a decay property of correlations of (anti-)commutators for local fields.

\section{Asymptotic relations}\label{asymp}

With a choice of an origin, the spacetime becomes the Minkowski vector space $(M,g)$, with $x\cdot y$ and $x^2=x\cdot x$ notation for scalar products. The interior of the future light-cone will be denoted by $V_+$, its closure by $\clc$, and $H_+$ will stand for the future branch of the unit hyperboloid $x^2=1$. For $v\in H_+$ the invariant measure $d^3v/v^0$ will be denoted by $d\mu(v)$. The scalar product and norm for $f,g$ in the Hilbert space $L^2(H_+,d\mu)$ will be denoted by $(f,g)_{H_+}$ and $\|f\|_{H_+}$ respectively. If a Minkowski basis $(e_0,\ldots,e_3)$ is chosen, we shall denote by $\vec{x}$ the orthogonal projection of $x\in M$ onto the subspace orthogonal to $e_0$, with $\vec{x}\cdot\vec{y}$ denoting the Euclidean scalar product in this subspace and $|\vec{x}|$ the norm. Then  $|x|^2=|x^0|^2+|\vec{x}|^2$. The Lebesgue measure element in $M$ will be denoted by $dx$.

We assume that a QFT is defined in terms of a field *-algebra $\F$ of bounded operators acting in a Hilbert space $\Hc$. The algebra includes, beside observables, also operators interpolating between inequivalent representations of observables, such as creators/annihilators of electric charge. Spacetime translations are performed by a unitary, continuous representation $U(a)$ of the translation group acting in $\Hc$, and the spectrum of its generators is contained in $\clc$ (relativistic energy positivity). However, we do \emph{not} assume the existence of the vacuum vector state, nor the action of a Lorentz group representation in $\Hc$ . For each bounded operator $A$ acting in $\Hc$ and an integrable function $\vp$ one denotes
\begin{equation}\label{field}
A(x)=U(x)AU(-x)\,,\quad A(\vp)=\int\vp(x)A(x)\,dx\,,
\end{equation}
so that
\begin{equation}\label{ffield}
 A(\chi)(x)=\int\chi(y-x)A(y)\,dy\,,\quad A(\chi)(\vp)=A(\chi*\vp)\,,
\end{equation}
where $\chi*\vp$ is the convolution of functions.

In all what follows we consider two versions of the constructions, fer\-mio\-nic or bosonic indicated by subscripts $\pm$.
\begin{ass}\label{com}
The algebra $\F$ contains a subset $\K_\pm$, closed under conjugation, with the following property.
There is a $\kappa>0$ such that for \mbox{$\Psi_1,\Psi_2\in\K_\pm$} the following bounds hold:
\begin{equation}\label{fcom}
 \|[\Psi_1,\Psi_2(a)]_{\pm}\|\leq\frac{c}{(r+|\vec{a}|-|a^0|)^\kappa}\quad\text{for}\quad a^2\leq0\quad(\text{i.e.}\ |\vec{a}|-|a^0|\geq0)\,,
\end{equation}
with some constants $c$ and $r$.

The assumption is covariant: if the bound holds in any particular reference system, it is valid in all other, with some other constant $c$.
\end{ass}
\noindent
The covariance of the condition follows from the relations
\begin{equation*}
 |\vec{a}|-|a^0|=-a^2/(|\vec{a}|+|a^0|)\,,\quad \al^{-1}\leq(|\vec{a}|+|a^0|)(|\vec{a}'|+|a'^0|)^{-1}\leq \al\,,
\end{equation*}
where primed quantities refer to another Minkowski basis and $\al>1$ is a~constant depending on the relation between the bases.

We note that as $[\Psi_1(x),\Psi_2(y)]_\pm=U(x)[\Psi_1,\Psi_2(y-x)]_\pm U(-x)$, one has $\|[\Psi_1(x),\Psi_2(y)]_\pm\|=\|[\Psi_1,\Psi_2(y-x)]_\pm\|$.
Thus the fields $\Psi(x)$ need not be local, but the interference of operations performed by their application decreases with the spacelike distance.

An important fact about the bounds of the above Assumption is that they are conserved under smearing:

\begin{pr}\label{boundchi}
 If $\chi_i$ are Schwartz functions and $\Psi_i$ satisfy Assumption \ref{com} ($i=1,2$), then also $\Psi_i(\chi_i)$ satisfy the bounds \eqref{fcom} (possibly with some other constant $c$).
\end{pr}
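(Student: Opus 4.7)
The plan is to expand the smeared commutator as a double spacetime integral and show that the mild spacelike decay \eqref{fcom} for $B(z):=\|[\Psi_1,\Psi_2(z)]_\pm\|$ combines with the Schwartz decay of $\chi_1,\chi_2$ to reproduce a bound of the same form for the smeared fields.

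From \eqref{field} and the identity $\|[\Psi_1(x),\Psi_2(y)]_\pm\|=B(y-x)$ noted just before the proposition, the substitution $u=y-a$ gives
\begin{equation*}
\|[\Psi_1(\chi_1),\Psi_2(\chi_2)(a)]_\pm\|\leq\iint|\chi_1(x)||\chi_2(u)|\,B(a+u-x)\,dx\,du.
\end{equation*}
Fix $a$ with $a^2\le 0$ and set $s:=|\vec a|-|a^0|\ge 0$. I would split the $(x,u)$-domain into a near region $N_a:=\{(x,u):|x|+|u|\le s/4\}$ and its complement.

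On $N_a$, writing $w=a+u-x=a+\xi$ with $|\xi|\leq s/4$, one has $|\vec w|\ge|\vec a|-s/4$ and $|w^0|\le|a^0|+s/4$, so $w$ is spacelike and $|\vec w|-|w^0|\ge s/2$. Assumption \ref{com} then gives $B(a+u-x)\le c(r+s/2)^{-\kappa}\le 2^\kappa c(r+s)^{-\kappa}$, and the corresponding part of the double integral is bounded by $2^\kappa c\,\|\chi_1\|_{L^1}\|\chi_2\|_{L^1}(r+s)^{-\kappa}$. On the complement, at least one of $|x|,|u|$ exceeds $s/8$; using the trivial bound $B(\cdot)\le 2\|\Psi_1\|\|\Psi_2\|$ together with the Schwartz estimate $|\chi_i(z)|\le C_N(1+|z|)^{-N}$ for some fixed $N>4+\kappa$, the tail integral
\begin{equation*}
\int_{|z|\ge s/8}(1+|z|)^{-N}\,dz = O(s^{-(N-4)})
\end{equation*}
decays faster than $(r+s)^{-\kappa}$. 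Summing the two contributions delivers a bound of the shape \eqref{fcom} for $\Psi_1(\chi_1)$ and $\Psi_2(\chi_2)$, with a new constant $c$.

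The only mildly technical step is the elementary geometric observation used on $N_a$, which is pure triangle-inequality arithmetic; there is no substantial obstacle beyond the bookkeeping of constants, and the covariance asserted in Assumption \ref{com} carries over automatically since the argument was conducted in an arbitrary reference frame.
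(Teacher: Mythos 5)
Your proof is correct and follows essentially the same route as the paper's: split the smearing integral into a near region (where a triangle inequality shows the shifted argument stays spacelike with $|\vec w|-|w^0|\geq(|\vec a|-|a^0|)/2$, so Assumption \ref{com} applies up to a factor $2^\kappa$) and a far region (where the trivial operator-norm bound combines with the Schwartz decay of the test functions). The only cosmetic difference is that the paper first collapses the double integral to a single one via the fast-decreasing function $\rho_{12}(z)=\int|\chi_1(w)\chi_2(w+z)|\,dw$, whereas you keep both integration variables; the estimates are otherwise identical.
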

\begin{proof}
By a change of integration variables one shows that for each $n>4$:
\begin{multline}\label{fboundchi}
 \|[\Psi_1(\chi_1),\Psi_2(\chi_2)(a)]_\pm\|\leq\int\rho_{12}(z)\|[\Psi_1,\Psi_2(a+z)]_\pm\|\,dz\\
 \leq\con\int\frac{\|[\Psi_1,\Psi_2(a+z)]_\pm\|}{(r+|z|)^n}\,dz\,,
\end{multline}
where $\rho_{12}(z)=\int|\chi_1(w)\chi_2(w+z)|\,dw$, and in the second step we used the fact that $\rho_{12}$ is of fast decrease. Let now $a^2\leq0$ and split the integration domain in the rhs of \eqref{fboundchi} into two sets: (i) $|z|\leq(|\vec{a}|-|a^0|)/4$, and (ii) the rest. In domain (ii) we use the fact that  $\|[\Psi_1,\Psi_2(b)]_\pm\|\leq2\|\Psi_1\|\|\Psi_2\|$ for all~$b$, so this region contributes a term bounded by $\con(r+|\vec{a}|-|a^0|)^{-n+4}$; now $n\geq4+\kappa$ is enough for the thesis. In the domain (i) one has
\begin{equation*}
 |\vec{a}+\vec{z}|-|a^0+z^0|\geq|\vec{a}|-|\vec{z}|-|a^0|-|z^0|\geq(|\vec{a}|-|a^0|)/2\,,
\end{equation*}
so $\|[\Psi_1,\Psi_2(a+z)]_\pm\|$ is bounded by the rhs of \eqref{fcom} multiplied by $2^\kappa$; this again is sufficient for the thesis.
\end{proof}

We now introduce another type of smearing of $\Psi$, used earlier for classical fields in \cite{he95}.
\begin{dfn}\label{smear}
For $\Psi\in\K_\pm$, $\la>0$, and a Schwartz function $f$ on $H_+$, we denote
\begin{equation}\label{fsmear}
 \Psi[\la,f]=\Big(\frac{\la}{2\pi}\Big)^{3/2}\int \Psi(\la v)f(v)\,d\mu(v)\,.
\end{equation}
\end{dfn}
\noindent
\begin{thm}\label{limit}\ \\
(i) Let $\kappa>3$ in Assumption \ref{com}, what we assume from now on. Then
\begin{equation}\label{flimit}
 \limsup_{\la\to\infty}\big\|\big[\Psi_1[\la,f_1],\Psi_2[\la,f_2]\big]_\pm\big\|\leq\con\int|f_1(v)f_2(v)|(v^0)^3\,d\mu(v)\,.
\end{equation}
(ii) Let in addition the supports of $f_i$ be disjoint, and denote $\cosh\gamma_{12}=\inf\{v_1\cdot v_2\mid v_i\in\supp f_i\}>1$.
Then for any $0\leq\gamma<\gamma_{12}$ the bound
\begin{equation}\label{disj}
 \big\|\big[\Psi_1[\la_1,f_1],\Psi_2[\la_2,f_2]\big]_\pm\big\|\leq\frac{\con}{(\la_1\la_2)^{(\kappa-3)/2}}
\end{equation}
holds uniformly for $\exp(-\gamma)\leq\la_1/\la_2\leq\exp(\gamma)$.
\end{thm}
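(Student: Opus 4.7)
The strategy is to start from
\begin{equation*}
\bigl[\Psi_1[\la_1,f_1],\Psi_2[\la_2,f_2]\bigr]_\pm = \frac{(\la_1\la_2)^{3/2}}{(2\pi)^3}\int\int[\Psi_1(\la_1 v_1),\Psi_2(\la_2 v_2)]_\pm f_1(v_1)f_2(v_2)\,d\mu(v_1)\,d\mu(v_2),
\end{equation*}
push norms under the integral sign, and reduce the kernel by the translation identity noted below Assumption~\ref{com} to $\|[\Psi_1,\Psi_2(a)]_\pm\|$ with $a=\la_2 v_2-\la_1 v_1$. In both parts $a^2\leq 0$, so Assumption~\ref{com} applies; the task reduces to a sharp lower bound on $|\vec{a}|-|a^0|=-a^2/(|\vec{a}|+|a^0|)$ that is uniform in the relevant parameters. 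I would obtain this directly in the original reference frame via the explicit form of the canonical boost, rather than invoke the covariance clause of Assumption~\ref{com}, whose implicit constants would depend on $v_1$.

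For part~(i), with $\la_1=\la_2=\la$, fix $v_1$ and substitute $v_2=\La(v_1)w$, where $\La(v_1)$ is the rotation-free boost sending $e_0$ to $v_1$; invariance of $d\mu$ gives $v_1\cdot v_2=w^0$, whence $-a^2=2\la^2(w^0-1)$. A componentwise computation (using the algebraic identity $|\vec{v}_1|^2/(v_1^0+1)=v_1^0-1$) yields the sharp bound $|\vec{v}_2-\vec{v}_1|+|v_2^0-v_1^0|\leq 2v_1^0(w^0-1+|\vec{w}|)$, so that
\begin{equation*}
|\vec{a}|-|a^0|\geq\frac{\la}{v_1^0}\cdot\frac{w^0-1}{w^0-1+|\vec{w}|},
\end{equation*}
with the fraction on the right comparable to $|\vec{w}|$ on any bounded region of $w$. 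Splitting the $v_2$-integration at $|\vec{w}|=\delta$, on the small-$\vec{w}$ part I would rescale $\vec{u}=\la\vec{w}/v_1^0$; the Jacobian $(v_1^0/\la)^3$ exactly cancels the smearing prefactor $\la^3$, the fraction above tends to $|\vec{u}|/2$, and by dominated convergence the inner integral converges to $(v_1^0)^3|f_2(v_1)|\int d^3u/(r+C|\vec{u}|)^\kappa$, finite precisely because $\kappa>3$. The complementary region $|\vec{w}|\geq\delta$ has the fraction bounded away from zero, producing a kernel of order $\la^{-\kappa}$ and a total contribution $O(\la^{3-\kappa})\to 0$; Schwartz decrease of the $f_i$ keeps the outer integral finite and delivers the claimed $\limsup$ bound with weight $(v^0)^3|f_1f_2|$.

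For part~(ii), the disjoint-support hypothesis forces $v_1\cdot v_2\geq\cosh\gamma_{12}$ uniformly, while the elementary identity $(\la_2-\la_1)^2=2\la_1\la_2(\cosh\log(\la_1/\la_2)-1)$ together with $|\log(\la_1/\la_2)|\leq\gamma$ gives $(\la_2-\la_1)^2\leq 2\la_1\la_2(\cosh\gamma-1)$. Subtracting,
\begin{equation*}
-a^2=2\la_1\la_2(v_1\cdot v_2-1)-(\la_2-\la_1)^2\geq 2\la_1\la_2(\cosh\gamma_{12}-\cosh\gamma),
\end{equation*}
a positive multiple of $\la_1\la_2$. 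The crude estimate $|\vec{a}|+|a^0|\leq 2(\la_1 v_1^0+\la_2 v_2^0)\leq C'\sqrt{\la_1\la_2}\,(v_1^0+v_2^0)$ (using the bounded ratio) now suffices, producing $\|[\Psi_1,\Psi_2(a)]_\pm\|\leq c(v_1^0+v_2^0)^\kappa/(C''\sqrt{\la_1\la_2})^\kappa$; multiplying by the smearing prefactor $(\la_1\la_2)^{3/2}$ and integrating, the added polynomial $(v_1^0+v_2^0)^\kappa$ being absorbed by Schwartz decrease of the $f_i$, yields the uniform bound $\con/(\la_1\la_2)^{(\kappa-3)/2}$.

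The principal obstacle is the sharp componentwise estimate in part~(i): the naive bound $|\vec{a}|+|a^0|\leq 2\la(v_1^0+v_2^0)$ is too weak by a factor $|\vec{w}|$ near the diagonal and would generate a spurious $\sqrt{\la}$-divergence after rescaling. The cancellation between the $O(w^0-1)$ and $O(|\vec{w}|)$ contributions in the explicit boost formula is what furnishes the correct Jacobian and thus the exact weight $(v^0)^3$ in the thesis; part~(ii), holding $v_1\cdot v_2-1$ away from zero, sidesteps the issue entirely.
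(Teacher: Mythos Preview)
Your argument is correct in both parts. Part~(ii) is essentially the paper's proof in different notation: the paper sets $\la=\sqrt{\la_1\la_2}$, $s=\sqrt{\la_1/\la_2}$ and arrives at exactly your lower bound $-a^2\geq 2\la_1\la_2(\cosh\gamma_{12}-\cosh\gamma)$ and upper bound $|\vec{a}|+|a^0|\leq 2e^{\gamma/2}\sqrt{\la_1\la_2}\,(v_1^0+v_2^0)$, then concludes identically.

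Part~(i) is where you genuinely diverge. The paper never boosts to the rest frame of $v_1$; instead it uses the symmetric substitution $\vec{w}=\la(\vec{v}_2-\vec{v}_1)$, $\vec{z}=\tfrac12(\vec{v}_1+\vec{v}_2)$, and exploits the \emph{exact} identity $a^0=2\vec{z}\cdot\vec{w}/(v_1^0+v_2^0)$ to write the kernel as $\bigl[r+|\vec{w}|-2|\vec{z}\cdot\vec{w}|/(v_1^0+v_2^0)\bigr]^{-\kappa}$. The key step is then a monotonicity observation: $d(v_1^0+v_2^0)/d\la\leq 0$ for fixed $\vec{w},\vec{z}$, so the kernel is increasing in $\la$ and bounded by its pointwise limit $\bigl[r+|\vec{w}|-|\vec{z}\cdot\vec{w}|/z^0\bigr]^{-\kappa}$. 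Dominated convergence is immediate, and the remaining $\vec{w}$-integral is an elementary closed-form computation yielding the weight $(z^0)^3$. Your route---the canonical boost $v_2=\La(v_1)w$, the componentwise inequality $|\vec{v}_2-\vec{v}_1|+|v_2^0-v_1^0|\leq 2v_1^0(w^0-1+|\vec{w}|)$, and the rescaling $\vec{u}=\la\vec{w}/v_1^0$---is a legitimate alternative; the estimate you state does hold (the identity $|\vec{v}_1|^2/(v_1^0+1)=v_1^0-1$ indeed collapses the coefficients to $v_1^0+|\vec{v}_1|\leq 2v_1^0$), and your dominated-convergence argument goes through with the uniform majorant $(r+|\vec{u}|/C_\delta)^{-\kappa}$. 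The trade-off: the paper's symmetric variables give an exact expression for the kernel and a one-line monotone domination, avoiding the split at $|\vec{w}|=\delta$ and the explicit boost algebra; your approach makes the origin of the $(v^0)^3$ weight transparent as a Jacobian and is perhaps more robust if one wanted to vary the decay hypothesis on the commutator.
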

\begin{proof}
(ii) Denote $\la=\sqrt{\la_1\la_2}$, $s=\sqrt{\la_1/\la_2}$, so $\la_1v-\la_2u=\la(sv-s^{-1}u)$. If $\exp(-\gamma/2)\leq s\leq\exp(\gamma/2)$, then $-(sv-s^{-1}u)^2\geq2(\cosh\gamma_{12}-\cosh\gamma)>0$. As at the same time $|s\vec{v}-s^{-1}\vec{u}|+|sv^0-s^{-1}u^0|\leq2\exp(\gamma/2)(v^0+u^0)$, so $|s\vec{v}-s^{-1}\vec{u}|-|sv^0-s^{-1}u^0|\geq(\cosh\gamma_{12}-\cosh\gamma)\exp(-\gamma/2)(v^0+u^0)^{-1}$.  With the use of Assumption \ref{com} one finds then that the lhs of \eqref{disj} is bounded by
\begin{multline}\label{la1la2}
 \con\int\frac{\la^3\,|f_1(v)||f_2(u)|\,d\mu(u)\,d\mu(v)}{[r+\la(|s\vec{v}-s^{-1}\vec{u}|-|sv^0-s^{-1}u^0|)]^\kappa}\\
 \leq\con\,\la^{-(\kappa-3)}\int|f_1(v)||f_2(u)|(v^0+u^0)^\kappa\,d\mu(u)\,d\mu(v)\,,
\end{multline}
which ends this part of the proof.\pagebreak\\
(i) It is easy to see that for $\la_1=\la_2=\la$ ($s=1$) the first form of the bound \eqref{la1la2} is valid irrespective of the support properties of $f_i$. We change the integration variables to $\vec{w}=\la(\vec{u}-\vec{v})$, $\vec{z}=(\vec{u}+\vec{v})/2$, and then this bound becomes
\begin{equation}\label{wz}
 \big\|\big[\Psi_1[\la,f_1],\Psi_2[\la,f_2]\big]_\pm\big\|
 \leq\con\int\frac{|f_1(v)f_2(u)|\,d^3w\,d^3z}{v^0u^0\bigg[r+|\vec{w}|-\dfrac{2|\vec{z}\cdot\vec{w}|}{u^0+v^0}\bigg]^\kappa}\,,
\end{equation}
where $u$ and $v$ are functions of $\vec{w}$, $\vec{z}$ and $\la$. It is now sufficient to show that the limit for $\la\to\infty$ of the rhs of this bound is equal to the rhs of \eqref{flimit}. Elementary calculation shows that $d(u^0+v^0)/d\la\leq0$, so $u^0+v^0\geq\lim_{\la\to\infty}(u^0+v^0)=2z^0$, where we introduced the vector $z\in H_+$ with the space part $\vec{z}$. Thus we have
\begin{equation*}
 \bigg[r+|\vec{w}|-\dfrac{2|\vec{z}\cdot\vec{w}|}{u^0+v^0}\bigg]^{-\kappa}\leq\lim_{\la\to\infty}\Big[...\Big]^{-\kappa}
 =\bigg[r+|\vec{w}|-\dfrac{|\vec{z}\cdot\vec{w}|}{z^0}\bigg]^{-\kappa}\,.
\end{equation*}
Taking into account that $f_1(u)f_2(v)$ is a Schwartz function on $H_+\times H_+$, we can apply the Lebesgue theorem and draw the $\la\to\infty$ limit under the integral. In this way we find that the limit of the rhs of \eqref{wz} is
\begin{equation*}
 \con\int |f_1(z)f_2(z)|\bigg\{\int\bigg[r+|\vec{w}|-\frac{|\vec{z}\cdot\vec{w}|}{z^0}\bigg]^{-\kappa}\,d^3w\bigg\}\frac{d\mu(z)}{z^0}\,.
\end{equation*}
Performing the elementary integral inside the braces one arrives at the thesis.
\end{proof}

\noindent
{\bf Remark}\ \
The functional dependence on $a$ of the bound in Assumption \ref{com} is designed not to enforce steep descent in the neighborhood of the lightcone. Explicitly depending only on invariants, but more restrictive, alternative for the rhs of the bound \eqref{fcom} would be $[r+\sqrt{-a^2}]^{-\kappa}$ for $a^2<0$. One can show that Propositions \ref{boundchi} and \ref{limit} remain in force, with the modification that the rhs in \eqref{flimit} takes the explicitly invariant form $\con\int|f_1(v)f_2(v)|\,d\mu(v)$.

\section{Fourier transforms and a particle}\label{foutr}

We use the following convention for Fourier transforms:
\begin{equation}
 \wh{\chi}(p)=\frac{1}{(2\pi)^2}\int \chi(x) e^{ip\cdot x}\,dx\,,
\end{equation}
and the inverse transform of $\chi$ is denoted by $\wch{\chi}$. The distributional transform is then defined as usually by $\wh{T}(\chi)=T(\wh{\chi})$, which is equivalent to  $\wch{T}(\wh{\chi})=T(\chi)$. We apply this also to the operators defined in \eqref{field} and use the usual symbolic integral notation:
\begin{equation}
 \Psi(\chi)=\wch{\Psi}(\wh{\chi})=\int\wch{\Psi}(p)\wh{\chi}(p)\,dp\,.
\end{equation}
Note also that
\begin{equation}
 \Psi(\chi)^*=\Psi^*(\ov{\chi})=\wch{\Psi^*}(\wh{\ov{\chi}})\,,\quad\text{and}\quad \supp\wh{\ov{\chi}}=-\supp\wh{\chi}\,.
\end{equation}
It is well-known that the momentum transfer of $\Psi(\chi)$ is restricted by the support of $\wh{\chi}$ (see e.g. \cite{ha92}). More precisely, if $E(\Delta)$ projects onto the subspace of $\Hc$ with spectral values of the energy-momentum operators in a Borel set~$\Delta$, then
\begin{equation}
 \Delta_2\cap(\supp\wh{\chi}+\Delta_1)=\varnothing\quad\text{implies}\quad
 E(\Delta_2)\Psi(\chi)E(\Delta_1)=0\,.
\end{equation}

We now apply the smearing defined in \eqref{fsmear} to $\Psi(\chi)$ and find that
\begin{equation}\label{laF}
 \Psi(\chi)[\la,f]=\Psi(F_\la)\,,\quad\text{with}\quad F_\la(x)=\Big(\frac{\la}{2\pi}\Big)^{3/2}\int\chi(x-\la v)f(v)\,d\mu(v)\,,
\end{equation}
and then
\begin{equation}\label{conj}
 \Psi(\chi)[\la,f]^*=\Psi^*(\ov{\chi})[\la,\ov{f}]\,.
\end{equation}
It is obvious that $F_\la$ is a Schwartz function and
\begin{equation}\label{FF}
 \wh{F_\la}(p)=\Big(\frac{\la}{2\pi}\Big)^{3/2}\wh{\chi}(p)\int f(v)e^{i\la p\cdot v}\,d\mu(v)\,.
\end{equation}
One of the consequences of Theorem \ref{limit} is the following corollary.
\begin{pr}\label{normb}
If Assumption \ref{com} with $\kappa>3$ holds, then\\
(i) in fermionic case:
\begin{equation}\label{flsnorm}
 \limsup_{\la\to\infty}\|\Psi(\chi)[\la,f]\|\leq\con\|(v^0)^{3/2}f(v)\|_{H_+}\,,
\end{equation}
(ii) in bosonic case: if $\Delta\subseteq \clc$ is a bounded set and $0\notin\supp\wh{\chi}$, then
\begin{equation}\label{blsnorm}
 \limsup_{\la\to\infty}\|\Psi(\chi)[\la,f]E(\Delta)\|\leq\con\,\|(v^0)^{3/2}f(v)\|_{H_+}\,.
\end{equation}
\end{pr}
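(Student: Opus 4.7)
The plan is to deduce both parts from Theorem \ref{limit}(i) applied to the (anti)commutator $[A_\la^*,A_\la]_\pm$ of $A_\la := \Psi(\chi)[\la,f]$; the statistics dictates different routes from that bound to a norm estimate on $A_\la$ itself.

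For (i), since $A_\la^*A_\la$ and $A_\la A_\la^*$ are both positive, $\|A_\la\|^2 = \|A_\la^*A_\la\| \le \|A_\la^*A_\la + A_\la A_\la^*\| = \|[A_\la^*,A_\la]_+\|$. By \eqref{conj}, $A_\la^* = \Psi^*(\ov\chi)[\la,\ov f]$; closure of $\K_+$ under conjugation together with Proposition \ref{boundchi} puts $\Psi^*(\ov\chi)$ into the scope of Theorem \ref{limit}(i), which gives $\limsup_\la\|[A_\la^*,A_\la]_+\| \le \con\int|f(v)|^2(v^0)^3\,d\mu(v) = \con\|(v^0)^{3/2}f\|_{H_+}^2$. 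Taking the square root yields \eqref{flsnorm}.

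For (ii), the positivity trick fails because $A_\la^*A_\la - A_\la A_\la^* = [A_\la^*,A_\la]_-$ is signed, so I would combine the commutator bound with an iteration in the energy-momentum projector. Starting from $\|A_\la E(\Delta)\Phi\|^2 = \langle E(\Delta)\Phi, A_\la^*A_\la E(\Delta)\Phi\rangle$ and using $A_\la^*A_\la = A_\la A_\la^* + [A_\la^*,A_\la]_-$, one obtains
\begin{equation*}
\|A_\la E(\Delta)\Phi\|^2 \le \|A_\la^*E(\Delta)\Phi\|^2 + \|[A_\la,A_\la^*]_-\|\,\|\Phi\|^2.
\end{equation*}
Next, $A_\la^*$ has momentum transfer in $-\supp\wh\chi$, so $A_\la^*E(\Delta) = E(\Delta^{(1)})A_\la^*E(\Delta)$ with $\Delta^{(1)} := (\Delta - \supp\wh\chi)\cap\clc$. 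Applying Cauchy--Schwarz to $\|A_\la^*E(\Delta)\Phi\|^2 = \langle E(\Delta)\Phi, A_\la E(\Delta^{(1)})A_\la^*E(\Delta)\Phi\rangle$ yields $\|A_\la^*E(\Delta)\| \le \|A_\la E(\Delta^{(1)})\|$. Iterating gives
\begin{equation*}
\|A_\la E(\Delta)\|^2 \le \|A_\la E(\Delta^{(N)})\|^2 + N\,\|[A_\la,A_\la^*]_-\|,\quad \Delta^{(k+1)} := (\Delta^{(k)}-\supp\wh\chi)\cap\clc.
\end{equation*}

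The main obstacle, and the step I would devote most care to, is verifying that this iteration terminates: for $N$ large enough (depending on $\Delta$ and $\supp\wh\chi$ but not on $\la$), $\Delta^{(N)} = \varnothing$, so $E(\Delta^{(N)}) = 0$. Here the hypotheses that $\Delta\subseteq\clc$ is bounded and that $0\notin\supp\wh\chi$ must be used together, to rule out infinite chains $d_0, d_0 - q_1, d_0-q_1-q_2, \dots$ remaining inside $\clc$ with $d_0\in\Delta$ and $q_j\in\supp\wh\chi$. Granted termination, Theorem \ref{limit}(i) bounds $\limsup_\la\|[A_\la,A_\la^*]_-\|$ by $\con\|(v^0)^{3/2}f\|_{H_+}^2$, so $\limsup_\la\|A_\la E(\Delta)\|^2 \le N\con\|(v^0)^{3/2}f\|_{H_+}^2$, which is \eqref{blsnorm} after absorbing $\sqrt{N}$ into the constant.
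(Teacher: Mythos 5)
Your part (i) is exactly the paper's argument: $\|A\|^2=\|A^*A\|\le\|[A^*,A]_+\|$, combined with \eqref{conj}, closure of $\K_+$ under conjugation, Proposition \ref{boundchi} and Theorem \ref{limit}(i). Nothing to add there.

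Part (ii) has a genuine gap at precisely the step you flagged: the iteration does not terminate under the stated hypotheses. Your $\Delta^{(N)}$ consists of points $d-q_1-\cdots-q_N$ with $d\in\Delta$, $q_j\in\supp\wh{\chi}$ and all partial sums in $\clc$, and the assumptions ``$\Delta$ bounded'' and ``$0\notin\supp\wh{\chi}$'' do \emph{not} exclude infinite such chains. For instance, whenever $\chi$ is real one has $\supp\wh{\chi}=-\supp\wh{\chi}$; picking $q\in\supp\wh{\chi}$ and a $\Delta$ lying deep enough inside $V_+$ that $\Delta+q\subset\clc$, the chain $d\,\rightarrow\,d-(-q)=d+q\,\rightarrow\,d+q-q=d$ stays in $\clc$, so $\Delta^{(2k)}\supseteq\Delta$ for every $k$ and $E(\Delta^{(N)})$ never vanishes. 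Termination requires a \emph{one-sided} energy transfer, not merely $0\notin\supp\wh{\chi}$. The missing idea --- and the paper's route --- is a preliminary decomposition: first treat $\supp\wh{\chi}\subseteq\{p\mid p^0\le-\delta\}$, where $[\Psi(\chi)[\la,f]]^nE(\Delta)=0$ for large $n$ because the energy would become negative; then the Lemma of Buchholz quoted before the proof (applied with $P\ge E(\Delta)$ the projection onto $\ker A^n$) gives $\|AE(\Delta)\|^2\le(n-1)\|[A,A^*]\|$, after which Theorem \ref{limit}(i) finishes as in (i) --- your alternating $A/A^*$ iteration would also close in this one-sided situation. The case $\supp\wh{\chi}\subseteq\{p\mid p^0\ge\delta\}$ is handled symmetrically via $A^*$, and the general case follows by covering the closed set $\supp\wh{\chi}\not\ni0$ with finitely many slabs $\{p\mid\pm p\cdot e_k\ge\delta\}$ for several timelike directions $e_k$, splitting $\wh{\chi}$ by a subordinate partition of unity, and using the triangle inequality. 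Without this reduction the assertion ``$\Delta^{(N)}=\varnothing$ for $N$ large'' is simply false, so the proof as written does not go through.
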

\noindent
In the proof we shall use the following result due to Buchholz (\cite{bu90}, Lemma~2.1 and its obvious consequence).
\begin{lem}
Let $A$ be a bounded operator and $P$ the orthogonal projection operator onto the kernel space of $A^n$. Then:
\begin{equation}
 \|AP\|^2\leq(n-1)\|[A,A^*]\|\,,\quad \|A^*P\|^2\leq n\|[A,A^*]\|\,.
\end{equation}
\end{lem}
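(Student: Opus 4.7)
The plan is to prove both inequalities simultaneously by induction on $n$, exploiting the observation that $A$ maps $\ker A^n$ into $\ker A^{n-1}$: if $A^n x=0$ then $A^{n-1}(Ax)=0$. This inclusion reduces the inductive step for vectors $x\in\ker A^n$ to estimates on the vector $Ax$, which already lies in the smaller kernel space where the hypothesis applies.

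For the base case $n=1$, the projection $P$ is onto $\ker A$, so $AP=0$ and the first bound holds trivially (the right-hand side being $0$). For the second bound, take $x\in\ker A$ and use $Ax=0$ together with the identity
\begin{equation*}
 \|A^*x\|^2=\<x,AA^*x\>=\<x,A^*Ax\>+\<x,[A,A^*]x\>=\<x,[A,A^*]x\>\leq\|[A,A^*]\|\,\|x\|^2\,.
\end{equation*}

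For the inductive step, assume both inequalities with $n$ replaced by $n-1$ and fix $x\in\ker A^n$. Since $Ax\in\ker A^{n-1}$, the inductive hypothesis applied to the vector $Ax$ gives in particular $\|A^*(Ax)\|\leq\sqrt{n-1}\,\|[A,A^*]\|^{1/2}\|Ax\|$. Combining this with the Cauchy--Schwarz estimate $\|Ax\|^2=\<x,A^*Ax\>\leq\|x\|\,\|A^*Ax\|$ and cancelling one factor of $\|Ax\|$ yields
\begin{equation*}
 \|Ax\|^2\leq(n-1)\|[A,A^*]\|\,\|x\|^2\,,
\end{equation*}
which is the first bound at level $n$. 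The second bound then follows from the very same commutator identity used in the base case, namely $\|A^*x\|^2=\|Ax\|^2+\<x,[A,A^*]x\>$, together with the first bound just established:
\begin{equation*}
 \|A^*x\|^2\leq(n-1)\|[A,A^*]\|\,\|x\|^2+\|[A,A^*]\|\,\|x\|^2=n\|[A,A^*]\|\,\|x\|^2\,.
\end{equation*}

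The only subtle point is choosing to invoke the $A^*$-bound (rather than the $A$-bound) of the inductive hypothesis on the vector $Ax$: this is precisely what pairs with the Cauchy--Schwarz step to close the recursion on $\|Ax\|$, and it is also what is responsible for the coefficient $n-1$ in the final $A$-bound (as opposed to $n-2$, which a direct application of the $A$-bound for $Ax$ would suggest but would not suffice to close the argument).
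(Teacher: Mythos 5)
Your proof is correct. Note that the paper itself gives no proof of this lemma: it is quoted as Lemma~2.1 of Buchholz (\cite{bu90}) ``and its obvious consequence'' (the second inequality), so the comparison target is Buchholz's original argument rather than anything in the text. Your induction is essentially that argument, transcribed from the operator level to the vector level: Buchholz uses $A\,\ker A^n\subseteq\ker A^{n-1}$ in the form $AP_n=P_{n-1}AP_n$ to get $\|AP_n\|\leq\|A^*P_{n-1}\|$, and then the identity $AA^*=A^*A+[A,A^*]$ to get $\|A^*P_{n-1}\|^2\leq\|AP_{n-1}\|^2+\|[A,A^*]\|$, closing the same recursion with the same constants; your Cauchy--Schwarz step $\|Ax\|^2\leq\|x\|\,\|A^*Ax\|$ plays exactly the role of the projection identity. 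Two trivial points you may as well make explicit: the cancellation of $\|Ax\|$ requires the (trivially fine) case distinction $Ax=0$, and the passage from the pointwise bounds on $\ker A^n$ to the operator-norm bounds on $\|AP\|$, $\|A^*P\|$ uses $\|Py\|\leq\|y\|$. Neither is a gap.
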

\noindent
\begin{proof}[Proof of Prop.\ \ref{normb}]
(i) As $\|A^*A\|\leq\|[A^*,A]_+\|$ for any bounded operator, this is a direct consequence of \eqref{conj} and Theorem \ref{limit}.\\
(ii) Suppose first that $\supp\wh{\chi}\subseteq\{p\mid p^0\leq-\delta\}$ for some $\delta>0$. Then the energy-momentum transfer of $[\Psi(\chi)[\la,f]]^n$ is contained in \mbox{$\{p\mid p^0\leq-n\delta\}$}, so $[\Psi(\chi)[\la,f]]^nE(\Delta)=0$ for sufficiently large $n$. With the use of the Lemma the thesis now follows as in (i). If \mbox{$\supp\wh{\chi}\subseteq\{p\mid p^0\geq\delta\}$} similar reasoning holds with the use of $\Psi(\chi)[\la,f]^*$. In the general case a~closed set not containing zero vector may by covered by sets of the two above considered types, with respect to several Minkowski bases (note that for $v\in H_+$ there is $\al^{-1}v^0\leq v^0{}'\leq\al v^0$ for zero-coordinates in two Minkowski bases, with $\al$ independent of $v$).
\end{proof}

From now on we assume that $\supp\wh{\chi}$ is compact and contained in $V_+$. This allows us to apply the result of Proposition \ref{exp} in Appendix~\ref{regwave} for the expansion of $\wh{F_\la}(p)$ in Eqs.\ \eqref{laF} and \eqref{FF}. We apply the operator distribution $\wch{\Psi}(p)$ to both sides of the identity \eqref{fexp}. This immediately gives
\begin{equation}\label{expan}
 e^{-i3\pi/4}\sum_{j=0}^N\la^{-j}\Psi(\chi_j)[\la,f_j]=\int e^{i\la\sqrt{p^2}}\wh{\chi}(p)f\big(p/\sqrt{p^2}\big) \wch{\Psi}(p)\,dp
 +\Psi(R_\la)\,,
\end{equation}
where $\wh{\chi}_0(p)=(p^2)^{3/4}\wh{\chi}(p)$, $f_0=f$ and the other functions are defined in Appendix \ref{regwave}. All functions $f_j$ and $\chi_j$ are smooth, $f_j$ are of compact support, and $\supp\wh{\chi}_j=\supp\wh{\chi}$. This results in the next theorem, which we precede with the following denotation.

Let $\chi$ be a smooth function such that the support of $\wh{\chi}$ is compact and contained in $V_+$. Then $\schi$ will denote the function defined by
\begin{equation}
 \wh{\schi}(p)=(p^2)^{3/4}\wh{\chi}(p)\,.
\end{equation}
We note that the mapping $\chi\mapsto\schi$ is a linear isomorphism of this class of functions.
\begin{thm}\label{asymptotic}
If Assumption \ref{com} with $\kappa>3$ holds, and $\supp\wh{\chi}$ is compact, contained in $V_+$, then the following asymptotic relations hold:
\begin{equation}\label{asrel}
 e^{-i3\pi/4}\Psi(\schi)[\la,f]E(\Delta)
 =\int e^{i\la\sqrt{p^2}}\wh{\chi}(p)f\big(p/\sqrt{p^2}\big) \wch{\Psi}(p)\,dp\,E(\Delta)+O_{\|.\|}(\la^{-1})\,.
\end{equation}
Subscript $\|.\|$ indicates that the respective bounds hold in operator norm. Here and in what follows $\Delta$ is a bounded Borel set in bosonic case, and $E(\Delta)=\1$ in fermionic case.
\end{thm}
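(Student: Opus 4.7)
The plan is to invoke the asymptotic expansion \eqref{expan} for some fixed $N \geq 1$, identify the $j=0$ term with $e^{-i3\pi/4}\Psi(\schi)[\la,f]$, and control everything else in operator norm on the range of $E(\Delta)$. By the definition $\wh{\schi}(p) = (p^2)^{3/4}\wh{\chi}(p)$, the leading coefficient $\chi_0$ in \eqref{expan} equals $\schi$ and $f_0 = f$. Rearranging \eqref{expan} and multiplying by $E(\Delta)$ on the right expresses the difference between the two sides of \eqref{asrel} as the sum of the higher-order terms $-e^{-i3\pi/4}\sum_{j=1}^{N}\la^{-j}\Psi(\chi_j)[\la,f_j]E(\Delta)$ and the remainder $\Psi(R_\la)E(\Delta)$.

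For each $1 \leq j \leq N$, the function $\chi_j$ is smooth with $\supp\wh{\chi}_j = \supp\wh{\chi}$ compact and contained in $V_+$, and $f_j$ is smooth and compactly supported on $H_+$; in particular $0 \notin \supp\wh{\chi}_j$, as required in the bosonic case. Proposition \ref{normb} then yields $\limsup_{\la\to\infty}\|\Psi(\chi_j)[\la,f_j]E(\Delta)\| < \infty$, so that $\la^{-j}\|\Psi(\chi_j)[\la,f_j]E(\Delta)\| = O(\la^{-j})$; summed over the finite range $1 \leq j \leq N$, with dominant term $j=1$, this gives a contribution of order $O(\la^{-1})$.

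For the remainder $\Psi(R_\la)$ I would use the elementary estimate $\|\Psi(R_\la)\| \leq \|\Psi\|\,\|R_\la\|_{L^1}$. By choosing $N$ sufficiently large, the control on $R_\la$ supplied by the stationary-phase expansion in Appendix \ref{regwave} (Proposition \ref{exp}) forces $\|R_\la\|_{L^1} = O(\la^{-1})$, hence $\|\Psi(R_\la)E(\Delta)\| = O(\la^{-1})$. The main delicacy of the proof is precisely this last step: one needs the Appendix to yield not merely pointwise decay of $R_\la(x)$ but an $L^1$-type bound on a Schwartz-like function, uniformly in $\la$, so that the operator norm $\|\Psi(R_\la)\|$ inherits the desired $\la^{-1}$ decay. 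Once this ingredient is in hand, the theorem follows by combining the three estimates above.
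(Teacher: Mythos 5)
Your proposal is correct and follows essentially the same route as the paper: expand via \eqref{expan}, bound the terms $j\geq1$ uniformly by Proposition \ref{normb} so they contribute $O(\la^{-1})$, and control the remainder through the $L^1$ estimate $\|\Psi(R_\la)\|\leq\|\Psi\|\int|R_\la(x)|\,dx$, which is exactly what estimate \eqref{R} of Proposition \ref{exp} supplies ($\con\,\la^{-N+3/2}$, so $N=3$ already suffices). The ``delicacy'' you flag is precisely the point of the second half of Proposition \ref{exp}, where the pointwise Fourier-side bounds \eqref{Rh} are converted into the $L^1$ bound \eqref{R} via Plancherel and the Schwartz inequality.
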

\begin{proof}
The estimate \eqref{R} shows that in \eqref{expan}: $\|\Psi(R_\la)\|\leq\con\|\Psi\|\,\la^{-N+3/2}$. The choice $N=3$, with the use of Proposition \ref{normb} for $\Psi(\chi_j)[\la,f_j]$ ($j=1,2,3$), yields the result.
\end{proof}

Knowing that both sides of the relation \eqref{asrel} remain bounded in norm, we can try the following further specification.
\begin{ass}\label{asspart}
We assume that for some $m>0$, all Schwartz functions $f$ on $H_+$, and all smooth $\wh{\chi}$ with compact support in some neighborhood of $mH_+$, there exist weak limits
\begin{multline}\label{fasspart}
 \Psi_m^\out(\schi)[f]E(\Delta)=\mathrm{w}\!-\!\lim_{\la\to\infty}e^{-i(\la m+3\pi/4)}\Psi(\schi)[\la,f]E(\Delta)\\
 = \mathrm{w}\!-\!\lim_{\la\to\infty}\int e^{i\la\big(\sqrt{p^2}-m\big)}\wh{\chi}(p)f\big(p/\sqrt{p^2}\big) \wch{\Psi}(p)\,dp\,E(\Delta)\,.
\end{multline}
\end{ass}

Let $h$ be a smooth real function on $(0,\infty)$, with compact support, and such that $\int h(\la)\,d\la=1$. Denote $\ti{h}(\w)=\int e^{i\w\la}h(\la)\,d\la$. Now, for $\La>0$, multiply the expressions under the limit signs in Eq.\ \eqref{fasspart} by
\begin{equation}\label{hLa}
 h_\La(\la)=\La^{-1}h(\la/\La)\,,
\end{equation}
and integrate over $\la$. This gives
\begin{multline}\label{asspartsm}
 \Psi_m^\out(\schi)[f]E(\Delta)=\mathrm{w}\!-\!\lim_{\La\to\infty}\int h_\La(\la)e^{-i(\la m+3\pi/4)}\Psi(\schi)[\la,f]\,d\la\, E(\Delta)\\
 =\mathrm{w}\!-\!\lim_{\La\to\infty}\int\ti{h}\big(\La[\sqrt{p^2}-m]\big)\wh{\chi}(p)f\big(p/\sqrt{p^2}\big) \wch{\Psi}(p)\,dp\,E(\Delta)\,.
\end{multline}
\begin{lem}
If $\supp\wh{\chi}\cap mH_+=\varnothing$, then $\Psi_m^\out(\schi)[f]=0$.
\end{lem}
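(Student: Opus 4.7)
The plan is to show that the $\La\to\infty$ expression in \eqref{asspartsm} converges to zero in operator \emph{norm}, which is stronger than the required weak statement. The whole argument reduces to a Schwartz-space decay estimate on the momentum-space integrand; the $O_{\|.\|}(\la^{-1})$ remainder from \eqref{asrel} is absorbed trivially.

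Set $G_\La(p):=\ti{h}\big(\La[\sqrt{p^2}-m]\big)\wh{\chi}(p)f\big(p/\sqrt{p^2}\big)$. For each $\La$ this is a smooth function supported in the fixed compact set $\supp\wh{\chi}\subset V_+$, so $\wch{G_\La}\in\Sc(M)$ and the momentum-space integral in \eqref{asspartsm} equals $\Psi(\wch{G_\La})E(\Delta)$ by the identity $\int\wch{\Psi}(p)\wh{\chi}(p)\,dp=\Psi(\chi)$. Since $\|\Psi(x)\|=\|\Psi\|$, the elementary inequality $\|\Psi(\wch{G_\La})\|\leq\|\wch{G_\La}\|_{L^1}\|\Psi\|$ reduces the task to showing $\|\wch{G_\La}\|_{L^1}\to 0$.

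The hypothesis $\supp\wh{\chi}\cap mH_+=\varnothing$ combined with compactness of $\supp\wh{\chi}$ gives $\delta:=\inf_{p\in\supp\wh{\chi}}|\sqrt{p^2}-m|>0$. Because $h\in C^\infty_c(0,\infty)$, the function $\ti{h}$ is Schwartz on $\mR$, so $|\ti{h}^{(k)}(\w)|\leq C_{k,N}(1+|\w|)^{-N}$ for every $k,N$. Smoothness of $p\mapsto\sqrt{p^2}$ on $V_+$ and the chain rule then yield, for every multi-index $\al$ and every $N$,
\begin{equation*}
\sup_{p\in\supp\wh{\chi}}|\partial^\al_p\ti{h}(\La[\sqrt{p^2}-m])|\leq C_{\al,N}\La^{|\al|}(1+\La\delta)^{-N},
\end{equation*}
which for $N>|\al|$ tends to zero. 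Hence every derivative of $G_\La$ tends to zero uniformly on the fixed compact $\supp\wh{\chi}$, so $G_\La\to 0$ in $\Sc(M)$; by continuity of the Fourier transform $\wch{G_\La}\to 0$ in $\Sc(M)$, and in particular $\|\wch{G_\La}\|_{L^1}\to 0$.

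Finally, denote by $X_\La$ the first expression under the weak limit in \eqref{asspartsm}. Then \eqref{asrel} gives $X_\La=\Psi(\wch{G_\La})E(\Delta)+Z_\La$ with $\|Z_\La\|\leq\int|h_\La(\la)|\,C/\la\,d\la$. Since $\supp h\subset(0,\infty)$ is compact, $h_\La$ is supported in $[c_0\La,\infty)$ for some $c_0>0$, whence $\|Z_\La\|=O(\La^{-1})$. Combining the two bounds gives $\|X_\La\|\to 0$, so $\Psi_m^\out(\schi)[f]E(\Delta)=0$; since $\Delta$ is arbitrary (with $E(\Delta)=\1$ in the fermionic case), the lemma follows. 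The only technical ingredient is the Schwartz-decay estimate in the third paragraph, whose mechanism is that a Schwartz function evaluated at an argument bounded below by $\La\delta$ decays faster than any power of $\La$; the derivative bookkeeping is routine.
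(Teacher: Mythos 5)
Your proof is correct, and it reaches the conclusion by a somewhat different technical route than the paper, although the decisive mechanism is the same in both: on the compact set $\supp\wh{\chi}$, disjoint from $mH_+$, the argument of the Schwartz function $\ti{h}$ is bounded below by $\La\delta$, so its rapid decay beats every power of $\La$ produced by differentiation. The paper handles the operator-valued distribution $\wch{\Psi}(p)$ by taking matrix elements $(\vp_1,\wch{\Psi}(p)\vp_2)$, invoking the structure theorem for tempered distributions to write them as finite sums of distributional derivatives of continuous functions, and then letting each resulting term $\La^k\int[D^\al\ti{h}](\La[\sqrt{p^2}-m])D^\beta\wh{\chi}(p)F(p)\,dp$ vanish rapidly; this directly establishes weak convergence to zero, which is all that is needed since the limit in Assumption \ref{asspart} is only weak. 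You instead bypass the structure theorem by exploiting that $\Psi$ is a \emph{bounded} operator, so that $\|\Psi(\vp)\|\leq\|\Psi\|\,\|\vp\|_{L^1}$, and you reduce everything to $G_\La\to0$ in $\Sc(M)$ (uniform convergence of finitely many derivatives on a fixed compact set suffices for $\|\wch{G_\La}\|_{L^1}\to0$). This yields the stronger conclusion that the approximants converge to zero in operator norm, and it is arguably more elementary; the paper's matrix-element argument would survive in settings where no uniform norm bound on $\Psi(x)$ is available, but in the present framework ($\F$ consists of bounded operators and $\|\Psi(x)\|=\|\Psi\|$) that generality is not needed. Your handling of the $O_{\|.\|}(\la^{-1})$ remainder, using that $\supp h_\La\subset[c_0\La,\infty)$, is also correct and matches what the paper leaves implicit in passing from \eqref{fasspart} to \eqref{asspartsm}.
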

\begin{proof}
For any $\vp_1,\vp_2\in\Hc$ the product $(\vp_1,\wch{\Psi}(p)\vp_2)$ defines a tempered distribution, so may be represented by continuous functions and their distributional derivatives. Therefore all contributions to the operator under the limit on the rhs of \eqref{asspartsm}, placed between $\vp_1,\vp_2$, are of the form
\begin{equation*}
 \La^k\int [D^\alpha\ti{h}]\big(\La[\sqrt{p^2}-m]\big)D^\beta\wh{\chi}(p)F(p)\,dp
\end{equation*}
in standard multi-index notation, where $F$ is a continuous function. This is easily shown to vanish faster than any inverse power of $\La$ for $\La\to\infty$, if the premise holds.
\end{proof}

\begin{dfn}\label{out}
Suppose Assumption \ref{asspart} is satisfied with nonzero limit operators. Let $f$ be of compact support $\supp f\subset H_+$, and let compactly supported~$\wh{\chi}$ satisfy $\wh{\chi}(p)=1$ in some neighborhood of the set $m\supp f\subset mH_+$.\\
Then we set
\begin{multline}
 \Psi_m^\out[f]E(\Delta)=\mathrm{w}\!-\!\lim_{\la\to\infty}e^{-i(\la m+3\pi/4)}\Psi(\schi)[\la,f]E(\Delta)\\
 =\mathrm{w}\!-\!\lim_{\la\to\infty}\int e^{i\la\big(\sqrt{p^2}-m\big)}\wh{\chi}(p)f\big(p/\sqrt{p^2}\big) \wch{\Psi}(p)\,dp\,E(\Delta)
\end{multline}
and we interpret this as a creation operator of an asymptotic particle with mass~$m$.
\end{dfn}\noindent
Note that due to the preceding lemma the definition is indeed independent of $\chi$ in the assumed class.

\begin{pr}
Let the assumptions leading to Definition \ref{out} be satisfied. Then the following holds:
\begin{gather}
 \|\Psi_m^\out[f]E(\Delta)\|\leq\con\|(v^0)^{3/2}f(v)\|_{H_+}\,,\label{psibound}\\[1.5ex]
 \Delta_2\cap(m\supp f+\Delta_1)=\varnothing\quad\text{implies}\quad
 E(\Delta_2)\Psi_m^\out[f]E(\Delta_1)=0\,.\label{transf}
\end{gather}
Moreover, if $\supp f_i\subseteq D_\nu=\{v\in H_+\mid |\vec{v}|\leq\nu\}$ ($i=1,2$), then
\begin{equation}\label{difference}
 \supp f_1-\supp f_2\subseteq D_\nu-D_\nu\subset\Big\{\,q\,\,\big|\,\,|q^0|/|\vec{q}|\leq\frac{\nu}{\sqrt{\nu^2+1}}\,,\ |\vec{q}|\leq2\nu\Big\}\,.
\end{equation}
In this case \mbox{$\Delta_2\cap\big(m(D_\nu-D_\nu)+\Delta_1\big)=\varnothing$} implies
\begin{equation}\label{pptransf}
 E(\Delta_2)\Psi_{1m}^\out[f_1]^*\Psi_{2m}^\out[f_2]E(\Delta_1)=
 E(\Delta_2)\Psi_{1m}^\out[f_1]\Psi_{2m}^\out[f_2]^*E(\Delta_1)=0
\end{equation}
\end{pr}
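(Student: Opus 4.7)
For \eqref{psibound} I would apply Proposition~\ref{normb} to $\Psi(\schi)[\la,f]E(\Delta)$ (so $\chi$ there is replaced by $\schi$) and pass to the weak limit of Definition~\ref{out}, invoking weak lower semi-continuity of the operator norm; the bosonic hypothesis $0\notin\supp\wh{\schi}=\supp\wh{\chi}$ is automatic, since $\supp\wh{\chi}$ is a compact neighborhood of $m\supp f\subset mH_+$ with $m>0$, while the fermionic case is part (i) with $E(\Delta)=\1$. For \eqref{transf}, from \eqref{laF}--\eqref{FF} one has $\Psi(\schi)[\la,f]=\Psi(F_\la^\diamond)$ with $\supp\wh{F_\la^\diamond}\subseteq\supp\wh{\schi}=\supp\wh{\chi}$, so $E(\Delta_2)\Psi(\schi)[\la,f]E(\Delta_1)=0$ whenever $\Delta_2\cap(\supp\wh{\chi}+\Delta_1)=\varnothing$, and this identity survives the weak limit because $E(\Delta_2)$ is bounded. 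Using the $\chi$-independence of $\Psi_m^\out[f]$ noted after Definition~\ref{out}, I let $\supp\wh{\chi}$ run through an arbitrary compact-neighborhood basis of the compact set $m\supp f$ and intersect.

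For \eqref{difference}, the bound $|\vec{q}|\leq 2\nu$ is immediate, while the angular bound follows from the mean value theorem applied to $\vec{v}\mapsto\sqrt{1+|\vec{v}|^2}$ on the segment from $\vec{v}_1$ to $\vec{v}_2$---which stays inside the ball $|\vec{v}|\leq\nu$ by convexity---producing some $\vec{w}$ on this segment with $q^0=\vec{w}\cdot(\vec{v}_1-\vec{v}_2)/\sqrt{1+|\vec{w}|^2}$, and hence $|q^0|/|\vec{q}|\leq|\vec{w}|/\sqrt{1+|\vec{w}|^2}\leq\nu/\sqrt{1+\nu^2}$.

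For \eqref{pptransf}, with $\phi_i=E(\Delta_i)\psi_i$ the matrix element of the first product equals $(\Psi_{1m}^\out[f_1]\phi_2,\Psi_{2m}^\out[f_2]\phi_1)$. By \eqref{transf} the two factors lie, respectively, in $E(\Delta_2+m\supp f_1)\Hc$ and $E(\Delta_1+m\supp f_2)\Hc$; their pairing vanishes unless $\Delta_2\cap\big(\Delta_1+m(\supp f_2-\supp f_1)\big)\ne\varnothing$, which is excluded because $\supp f_2-\supp f_1\subseteq D_\nu-D_\nu$ and $\Delta_2\cap(m(D_\nu-D_\nu)+\Delta_1)=\varnothing$ by hypothesis. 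The second product is treated identically after complex-conjugating the matrix element, using the symmetry $D_\nu-D_\nu=-(D_\nu-D_\nu)$. The only step needing real care is the shrinking argument for \eqref{transf}, which combines compactness of $m\supp f$ with the $\chi$-independence of $\Psi_m^\out[f]$; everything else is either direct computation or an immediate consequence of Proposition~\ref{normb} and the momentum-transfer rule.
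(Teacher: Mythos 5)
Your overall route coincides with the paper's: \eqref{psibound} follows from Proposition \ref{normb} together with lower semicontinuity of the norm under weak limits; \eqref{transf} from the momentum-transfer rule combined with the $\chi$-independence of $\Psi_m^\out[f]$; \eqref{difference} by an elementary estimate (the paper uses the identity $v^0-u^0=(\vec v-\vec u)\cdot(\vec v+\vec u)/(v^0+u^0)$ where you use the mean value theorem, to the same effect); and \eqref{pptransf} by pairing the momentum supports of the two factors.

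There is, however, one genuine gap, and it sits exactly at the step you yourself single out as ``the only step needing real care''. To apply the momentum-transfer rule you need some admissible $\chi$ with $\Delta_2\cap(\supp\wh{\chi}+\Delta_1)=\varnothing$, i.e.\ an open neighborhood $U$ of $m\supp f$ with $\Delta_2\cap(U+\Delta_1)=\varnothing$. The hypothesis of \eqref{transf} only gives $\Delta_2\cap(m\supp f+\Delta_1)=\varnothing$, and for general Borel sets $\Delta_1,\Delta_2$ this does \emph{not} produce such a $U$: take for instance $m\supp f=\{k\}$, $\Delta_1=\{a\}$ and $\Delta_2=\{k+a+\ep_n\}_n$ with $\ep_n\to 0$, $\ep_n\neq 0$; then $\Delta_2$ is disjoint from $m\supp f+\Delta_1$ but meets $U+\Delta_1$ for every neighborhood $U$ of $k$. ``Intersecting over a compact-neighborhood basis'' does not repair this, because what is needed is the converse implication: disjointness from the intersection of the family must yield disjointness from some member of it. The paper closes this by a two-step reduction: first take $\Delta_1,\Delta_2$ compact, so that $\Delta_2$ and the compact set $m\supp f+\Delta_1$ are at positive distance and a separating $U$ exists; then pass to arbitrary Borel sets by inner regularity of the spectral measure, $E(\Delta_i)=\mathrm{s}\!-\!\lim E(K_i)$ over compact $K_i\subseteq\Delta_i$. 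The same reduction is needed in your orthogonality argument for \eqref{pptransf}: from \eqref{transf} the two vectors a priori lie only in the ranges of $E$ of the \emph{closures} of $m\supp f_1+\Delta_2$ and $m\supp f_2+\Delta_1$, and disjointness of two sets does not give disjointness of their closures unless one first restricts to compacts. With that reduction inserted, the rest of your argument is sound and matches the paper's.
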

\begin{proof}
The estimate \eqref{psibound} is a direct consequence of Proposition \ref{normb}. To prove the statement \eqref{transf} let first $\Delta_1,\Delta_2$ be compact and satisfy the assumption. Then one can find an open neighborhood $U$ of $m\supp f$ such that \mbox{$\Delta_2\cap(U+\Delta_1)=\varnothing$}. The implication follows in this case by choosing  $\wh{\chi}$ in the class defining $\Psi_m^\out[f]$, with support in $U$. The general case now follows by regularity of spectral measures. Relations \eqref{pptransf} may be shown in similar way. The estimate in \eqref{difference} follows easily from the obvious identity for $v,u\in H_+$: $v^0-u^0=(\vec{v}-\vec{u})\cdot(\vec{v}+\vec{u})/(v^0+u^0)$ .
\end{proof}

\noindent
{\bf Remark} Property \eqref{transf} is decisive for the interpretation: creation of an asymp\-to\-tic particle adds energy-momentum strictly on the mass hyperboloid. However, this need not be reflected in the presence of a discrete mass hyperboloid in the energy-momentum spectrum, if there is no vacuum in the Hilbert space.
\pagebreak[2]

More can be inferred if one adds a stronger assumption on the nature of convergence in Definition \ref{out} with added smearing of the form leading to  \eqref{asspartsm}.
\begin{ass}\label{outstr}
We assume that under the conditions of Definition \ref{out} the field $\Psi_m^\out[f]$ defined there may be obtained as a strong limit:
\begin{multline}
 \Psi_m^\out[f]E(\Delta)=\mathrm{s}\!-\!\lim_{\La\to\infty}\int h_\La(\la)e^{-i(\la m+3\pi/4)}\Psi(\schi)[\la,f]\,d\la\, E(\Delta)\\
 =\mathrm{s}\!-\!\lim_{\La\to\infty}\int\ti{h}\big(\La[\sqrt{p^2}-m]\big)\wh{\chi}(p)f\big(p/\sqrt{p^2}\big) \wch{\Psi}(p)\,dp\,E(\Delta)\,,
\end{multline}
where $h_\La$ is defined by \eqref{hLa}, with $h$ any function in the class defined there.
\end{ass}

\begin{pr}\label{commut}
 If Assumption \ref{outstr} is satisfied, then $\supp f_1\cap\supp f_2=\varnothing$ implies
\begin{gather}
 \big[\Psi_{1m}^\out[f_1]^\sharp,\Psi_{2m}^\out[f_2]^\sharp\big]_\pm E(\Delta)=0\,,\label{comout}\\[1ex]
 \Big[\Psi_{1m}^\out[f_1]^\sharp,\big[\Psi_{2m}^\out[f_2]^\sharp,\Psi_{3m}^\out[f_3]^\sharp\big]_\pm\Big] E(\Delta)=0\label{com3out}\,.
\end{gather}
where $\sharp$ is either empty or the adjoint operator star $*$ (uncorrelated at the two or three operators).
\end{pr}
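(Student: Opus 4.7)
The plan is to pass the Assumption \ref{outstr} strong-limit representation through the (anti-)commutators and reduce the vanishing claims to Theorem \ref{limit}(ii) after integrating against $h_\La$. For \eqref{comout}, I write each $\Psi^\out_{im}[f_i]^\sharp E(\Delta)$ as the strong $\La\to\infty$ limit of
\begin{equation*}
A^{(i)}_\La E(\Delta) = \int h_\La(\la)\,e^{\mp i(\la m+3\pi/4)}\,\Psi_i^\sharp(\schi_i^\sharp)[\la,f_i^\sharp]\,d\la\cdot E(\Delta)\,,
\end{equation*}
where the $\sharp$-ing of arguments is dictated by \eqref{conj} and is legitimate because $\K_\pm$ is closed under conjugation. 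Proposition \ref{normb} supplies a uniform-in-$\La$ norm bound on $A^{(i)}_\La$ restricted to any bounded spectral subspace, so strong convergence of both factors together with the uniform bound implies strong convergence of the products to the products of the limits, and hence of $[A^{(1)}_\La,A^{(2)}_\La]_\pm E(\Delta)$ to $[\Psi^\out_{1m}[f_1]^\sharp,\Psi^\out_{2m}[f_2]^\sharp]_\pm E(\Delta)$. Thus it suffices to show $\|[A^{(1)}_\La,A^{(2)}_\La]_\pm\|\to 0$. Expanding as a double integral and using Proposition \ref{boundchi} to transfer the Assumption \ref{com} bound to the smeared fields, the hypothesis $\supp f_1\cap\supp f_2=\varnothing$ (invariant under conjugation) is precisely that of Theorem \ref{limit}(ii), so the integrand's norm is bounded by $\con(\la_1\la_2)^{-(\kappa-3)/2}$ uniformly on $\supp h_\La(\la_1)h_\La(\la_2)$; compactness of $\supp h\subset(0,\infty)$ keeps $\la_1/\la_2$ in a fixed compact interval. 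Integrating yields $\|[A^{(1)}_\La,A^{(2)}_\La]_\pm\|\le\con\La^{-(\kappa-3)}\to 0$, establishing \eqref{comout}.

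For \eqref{com3out} the natural strategy is the same, now with a triple integral and the algebraic identity
\begin{equation*}
 [X_1,[X_2,X_3]_\pm]_- = [[X_1,X_2]_\pm, X_3]_- \pm [[X_1,X_3]_\pm, X_2]_-\,,
\end{equation*}
verified by direct expansion. Substitution into the triple integral, using that each $X_i$ depends only on its own $\la_i$, factorizes the expression into $[[A^{(1)}_\La,A^{(2)}_\La]_\pm, A^{(3)}_\La]_-$ plus $\pm[[A^{(1)}_\La,A^{(3)}_\La]_\pm, A^{(2)}_\La]_-$. The first summand tends to zero in norm on $E(\Delta)\Hc$: the inner (anti-)commutator is $O(\La^{-(\kappa-3)})$ by the first paragraph, and after inserting a bounded spectral projection $E(\Delta')$ obtained by shifting $\Delta$ by the compact momentum transfer of $A^{(3)}_\La$, Proposition \ref{normb}'s uniform bound on $A^{(3)}_\La E(\Delta')$ delivers the conclusion.

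The main obstacle, and the only non-routine step, is the second summand $\pm[[A^{(1)}_\La,A^{(3)}_\La]_\pm, A^{(2)}_\La]_-$: the inner (anti-)commutator does not decay, because disjointness of $\supp f_1$ and $\supp f_3$ is not part of the hypothesis. A purely algebraic manipulation via the super-Jacobi identity
\begin{equation*}
 [A^\out_1,[A^\out_2,A^\out_3]_\pm] + [A^\out_2,[A^\out_3,A^\out_1]_\pm] + [A^\out_3,[A^\out_1,A^\out_2]_\pm] = 0
\end{equation*}
on $E(\Delta)\Hc$ kills the third summand by \eqref{comout} but then only reproduces the factorized identity above, giving no new information. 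The argument must therefore exploit finer structure of the strong $\La$-limits: the key point to establish is that on $E(\Delta)\Hc$ the asymptotic (anti-)commutator $[\Psi^\out_{1m}[f_1]^\sharp,\Psi^\out_{3m}[f_3]^\sharp]_\pm$ is central with respect to $\Psi^\out_{2m}[f_2]^\sharp$, a \emph{CAR/CCR-like} property emerging from the asymptotic analysis. My plan for verifying this would be to return to the triple integral and bound $\|[[X_1,X_3]_\pm, X_2]_-\|$ directly through a refined use of Theorem \ref{limit}, exploiting the oscillation introduced by the $e^{-i(\la_1+\la_2+\la_3)m}$ phases combined with the $h_\La$ weights on the support of which the ratios $\la_i/\la_j$ remain bounded; the delicate bookkeeping of the bounded spectral subsets invoked at each insertion of a spectral projection is what makes this step technical.
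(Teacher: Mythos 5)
Your treatment of \eqref{comout} follows the paper's route: insert intermediate spectral projections $E(\Delta')$, use the uniform norm bounds from Proposition \ref{normb} to pass the strong limits of Assumption \ref{outstr} through the product, and reduce the vanishing to Theorem \ref{limit}(ii). One point you gloss over: Theorem \ref{limit}(ii) applies only when $\la_1/\la_2$ stays within $(\exp(-\gamma_{12}),\exp(\gamma_{12}))$, with $\gamma_{12}$ determined by the separation of $\supp f_1$ and $\supp f_2$; ``a fixed compact interval'' is not automatically contained there, and you must invoke the freedom in Assumption \ref{outstr} (the limit is independent of $h$ in the admissible class) to choose $h_i$ with support in an interval $[\tau_1,\tau_2]$ narrow enough that $\tau_2/\tau_1<\exp(\gamma_{12})$. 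With that adjustment the first part is complete and matches the paper.

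For \eqref{com3out} there is a genuine gap, and you have correctly located it: after expanding by the graded Jacobi identity, the term $[[A_1,A_3]_\pm,A_2]$ is not controlled, since $\supp f_1$ and $\supp f_3$ need not be disjoint --- indeed, up to sign and relabelling that term is just another instance of \eqref{com3out}, so bounding it ``directly through a refined use of Theorem \ref{limit}'' is circular, and the appeal to phase oscillations has no mechanism behind it. The missing idea is elementary: since $\supp f_1$ and $\supp f_2$ are disjoint compact sets, split $f_3=f_{31}+f_{32}$ with $\supp f_{3i}\cap\supp f_i=\varnothing$ (multiply $f_3$ by a smooth cutoff equal to $0$ near $\supp f_1$ and $1$ near $\supp f_2$, and by its complement). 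For the $f_{32}$ piece the inner bracket $[\Psi_{2m}^\out[f_2]^\sharp,\Psi_{3m}^\out[f_{32}]^\sharp]_\pm E(\Delta')$ already vanishes by \eqref{comout}. For the $f_{31}$ piece the graded Jacobi identity you wrote expresses the double bracket through $[\Psi_{1m}^\out[f_1]^\sharp,\Psi_{2m}^\out[f_2]^\sharp]_\pm$ and $[\Psi_{1m}^\out[f_1]^\sharp,\Psi_{3m}^\out[f_{31}]^\sharp]_\pm$, and now \emph{both} vanish by \eqref{comout}, since $f_1$ has support disjoint from $f_2$ by hypothesis and from $f_{31}$ by construction. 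This decomposition, followed by \eqref{comout} and ``Poisson's identity,'' is exactly the paper's argument; note also that disjointness of supports is preserved under the conjugations implicit in $\sharp$, since $\supp\ov{f}=\supp f$.
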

\begin{proof}
First note that for the given bounded $\Delta$ there is some other bounded $\Delta'$ such that
\begin{multline*}
 \Psi_i(\schi_i)[\la_i,f_i]^\sharp\Psi_j(\schi_j)[\la_j,f_j]^\sharp E(\Delta)\\
 =\Psi_i(\schi_i)[\la_i,f_i]^\sharp E(\Delta')\Psi_j(\schi_j)[\la_j,f_j]^\sharp E(\Delta)\,.
\end{multline*}
Therefore
\begin{multline*}
 \big\|\big[\Psi_{1m}^\out[f_1]^\sharp,\Psi_{2m}^\out[f_2]^\sharp\big]_\pm E(\Delta)\big\|\\
 \leq\lim_{\La\to\infty}\int|h_1(\xi_1)h_2(\xi_2)|\big\|\big[\Psi_1(\schi_1)[\La\xi_1,f_1]^\sharp,\Psi_2(\schi_2)[\La\xi_2,f_2]^\sharp\big]_\pm\big\|\,d\xi_1\,d\xi_2\,.
\end{multline*}
Let now supports of $f_i$ be separated as in Theorem \ref{limit} (ii). But the supports of $h_i$ may be chosen such that $\exp(-\gamma)\leq\xi_1/\xi_2\leq\exp(\gamma)$ in the notation of this theorem. Then the expression under the limit vanishes as $\La^{-(\kappa-3)}$, which gives the first relation of Proposition. For the second relation it is sufficient to note that one can decompose $f_3=f_{31}+f_{32}$, where the support of $f_{3i}$ does not intersect that of $f_i$. Then the thesis follows for $f_{32}$ directly, and for $f_{31}$ by Poisson's identity (or similar identity in fermionic case).
\end{proof}

\noindent
{\bf Remark} Property \eqref{comout} is responsible for appropriate fermionic/bosonic statistics of the asymptotic particles. Property \eqref{com3out} generalizes  a part of the structure needed for asymptotic Fock space construction (see below the Haag-Ruelle case). However, in absence of vacuum situation is more complex. We do not study this question in generality here.

\section{A model}\label{model}

As mentioned in Introduction, our principal motivation is the study of the charged particle problem in quantum electrodynamics. Whether the concepts introduced in preceding sections will be of relevance for that case is not known yet. However, here we want to point out that an algebraic model put forward some time ago \cite{he98} as a candidate for the algebra of asymptotic fields in QED fits into the scheme. The model includes no dynamical interdependence of the matter and radiation fields, but there do exist remnant correlations between them which ensure the validity of a form of Gauss' law~\cite{he98}. Also, we want to emphasize the importance of the choice of hyperboloids rather than constant time hypersurfaces for the limiting procedure. For classical fields, this type of limiting was shown in \cite{he95} to be applicable also in the long-range case  (by an appropriate choice of electromagnetic gauge). This fact was one of the constituents in the construction of the model, and we expect this property to survive quantization.

The algebra of the model given in \cite{he98} does not refer to any spacetime localization of fields. In later articles \cite{he08} and \cite{he11} two alternative versions of localization were formulated: in regions restricted spatially, but extending to timelike infinity, or in generalized time-slices (linearly fattening towards spatial infinity). In either of this cases the algebra of the model may be given as follows:
\begin{equation}\label{stalg}
 \begin{split}
    W_\as(J)^*&=W_\as(-J)\,,\quad W_\as(0)=E\,,\\[1ex]
    W_\as(J_1)W_\as(J_2)&=\exp[-\tfrac{i}{2}\{J_1,J_2\}]W_\as(J_1+J_2)\,,\\[1ex]
    [\psi_\as(\chi_1),\psi_\as(\chi_2)]_+&=0\,,\qquad
    [\psi_\as(\chi_1),\psi_\as(\chi_2)^*]_+=\<\chi_1,\chi_2\>\,,\\[1ex]
    W_\as(J)\psi_\as(\chi)&=\psi_\as(\chi')W_\as(J)\,,\quad\text{where}
    \quad [\chi']=S_J[\chi]\,.
 \end{split}
\end{equation}
The elements $W_\as(J)$ describe the exponentiated electromagnetic field, and elements $\psi_\as(\chi)$ form a free, in the sense of the field equation, Dirac field. The smearing functions $J$ and $\chi$ are not, in general, of compact support. The scalar product for Dirac fields $\<\chi_1,\chi_2\>$ is standard, but the symplectic form for electromagnetic fields $\{J_1,J_2\}$ is a nontrivial extension of the standard form to a larger function space. The most important constitutive element of the structure is the presence of the nontrivial linear automorphisms $S_J$ of the space of equivalence classes of matter test functions (functions in one class, denoted by $[\chi]$,  give rise to the same element of the algebra), which define the non-commutativity of Dirac and electromagnetic fields. This non-commutativity prevents the model from complete factorization, in any Hilbert space representation, into matter $\times$ radiation structure. Also, no physical vacuum representation is admitted by the model~\cite{he98}.

A large class of physically motivated representations of the above algebra has the following properties. The Hilbert space of the representation has the form $\Hc=\Hc_F\otimes\Hc_r$, where $\Hc_F$ is the standard Fock space of the Dirac field, which is represented in $\Hc_F$ in standard vacuum representation.  Translations are represented by unitary group of operators $U(x)=U_F(x)\otimes U_r(x)$. But $\Hc_r$ is \emph{not a vacuum Fock space} and the electromagnetic field is \emph{not represented in $\Hc_r$ alone}. Thus the structure \emph{does not factorize} into tensor product. The spectrum of $U(x)$ is contained in $\clc$, but includes no discrete hyperboloid, and there is no vacuum vector in the total representation space. (We refer the reader to the original articles for more details and interpretation.)

The assumptions of our present constructions are satisfied immediately in these representations. Assumption \ref{com} is easily verified for spaces of test functions $\chi$ used in \cite{he08} and \cite{he11} (in the first case the functions can be made to vanish arbitrarily fast at infinity, and in the second case they are Schwartz functions). Assumptions \ref{asspart} and \ref{outstr} are satisfied trivially: if $\psi_\as(\chi)$ is substituted for $\Psi$ in \eqref{fasspart}, the operator under the limit on the far rhs does not depend on~$\la$. This is because $\psi_\as(\chi)$ commutes with $U_r(x)$, so its translations are as in the free field case.

This is of course a rather trivial application, but it shows that the structure considered here is free from contradiction.  Also, if the model could indeed be derived by some limiting procedure of the type considered in this paper, then this is what one should expect: repeat limiting should be trivial.

\section{Haag-Ruelle case}\label{hrtheory}

The logic of the Haag-Ruelle construction (HR) is somewhat different from the one we follow in this article. The HR formalism is designed to construct an asymptotic Hilbert space which may be generated from the vacuum by the asymptotic fields, with no regard to the question of asymptotic completeness; this space may be a proper subspace of the Hilbert space of the theory. Therefore, one does not need a condition engaging, like our Assumption \ref{asspart}, the whole Hilbert space of the theory for the limiting hypothesis. Instead, the existence of vacuum and of a discrete hyperboloid in the energy-momentum spectrum supply a more specific setting, in which the existence of asymptotic fields may be proved on an asymptotic (sub-)space.

In this section we want to indicate that a manifestly Lorentz-invariant variation of the HR construction may be based on the averaging on hyperboloids introduced by Definition~\ref{smear}. For the derivation of scattering states and statistics our Assumption \ref{com} (instead of strict locality) is sufficient. For the derivation of the Fock structure of asymptotic states we engage an additional Assumption \ref{cluster} on a cluster property of (anti-)commutators (see Paragraph \ref{hrasum} below). This assumption is satisfied automatically for local and almost local fields (translations of local fields smeared with Schwartz functions), but in general its derivation from Assumption~\ref{com} has not been proven. On the other hand, the decay of the correlations contained in Assumption \ref{cluster} may be very weak, which may suggest that such derivation from Assumption~\ref{com}, or similar condition, should be possible. This is an open problem.

In addition to widening the applicability of the HR theory, our construction brings also a few refinements, which apply, in particular, to the orthodox -- strictly local -- HR case: weaker spectral condition (formula \eqref{singul}) and lack of a non-covariant and rather unphysical condition on $3$-momentum-space behavior of test functions (appropriate vanishing for $\vec{p}\to0$, see \cite{dy05}). The HR case, of course, does not refer to the infraparticle problem; we include it in our discussion as another testing ground for our method.

\paragraph{}
\label{hrpsi}
We note first that everything up to, and including Theorem \ref{asymptotic}, remains valid. Therefore, if we denote
\begin{align}
 &\Psi_\La[f]=\int h(\xi)e^{-i(\La\xi m+3\pi/4)}\Psi(\schi)[\La\xi,f]\,d\xi\,,\\
 &\Psi'_\La[f]=\int\ti{h}\big(\La[\sqrt{p^2}-m]\big)\wh{\chi}(p)f\big(p/\sqrt{p^2}\big) \wch{\Psi}(p)\,dp\,,
\end{align}
then $(\Psi_\La[f]-\Psi'_\La[f])E(\Delta)=O_{\|.\|}(\La^{-1})\,.$ For brevity, the notation omits the dependence of these operators on $\chi$ and $h$. We assume that $\supp\wh{\chi}$ is contained in some neighborhood of $mH_+$ and $\wh{\chi}(p)=1$ in some smaller neighborhood of \mbox{$m\supp f\subseteq mH_+$}.

\paragraph{}
\label{hrcom}
Let $\supp f_1\cap\supp f_2=\varnothing$. Then for $h_i$ with appropriate supports there~is
\begin{gather}
 \lim_{\La\to\infty}\big\|\big[\Psi_{1\La}[f_1]^\sharp,\Psi_{2\La}[f_2]^\sharp\big]_\pm E(\Delta)\big\|=0\,,\label{limcom2}\\[1ex]
 \lim_{\La\to\infty}\Big\|\Big[\Psi_{1\La}[f_1]^\sharp,\big[\Psi_{2\La}[f_2]^\sharp,\Psi_{3\La}[f_3]^\sharp\big]_\pm\Big] E(\Delta)\Big\|=0\,.\label{limcom3}
\end{gather}
The limits do not change if some of the operators $\Psi_{i\La}[f_i]$ are replaced by their primed versions $\Psi'_{i\La}[f_i]$, and/or their derivatives with respect to $\La$.

Proof closely parallels that of Proposition \ref{commut}, with supports of $h_i$ specified there, the only difference being that here we do not assume the existence of the limits of operators. Admissibility of adding a prime is obvious, while differentiating on $\La$ amounts to the replacement of $h_\La$ by $-\La^{-1}g_\La$, where $g(\la)=d[\la h(\la)]/d\la$ and $g_\La(\la)=\La^{-1}g(\la/\La)$.

\paragraph{}
\label{hrspec}
Now one assumes the existence of the vacuum vector $\W$ in the representation space.
For $0\leq\mu<m$ denote
\begin{equation}
 E_\mu=E\big(\{p\mid |\sqrt{p^2}-m|\leq\mu\,,\ p^0>0\}\big)\,.
\end{equation}
Then for any $0<\ep<m$
\begin{equation}\label{singul}
 \int_0^\ep\|(E_\mu-E_0)\Psi\W\|\frac{d\mu}{\mu}<\infty\quad \text{implies}\quad \int_0^\infty\Big\|\frac{d\Psi'_\La[f]\W}{d\La}\Big\|\,d\La<\infty\,.
\end{equation}

To prove this we denote $R=\sqrt{P^2}-m\1$ and observe that $d\Psi'_\La[f]\W/d\La =R\,\ti{h}'(\La R)F(P)\Psi\W\,,$ where $F$ is of compact support intersecting $mH_+$, and $\ti{h}'(\w)=d\ti{h}(\w)/d\w$. Therefore for some $\delta>0$  there is $\|d\Psi'_\La[f]\W/d\La\|\leq\con\|E_\delta |R|(1+\La^2 R^2)^{-2}\Psi\W\|$. The integral over $\La\in\<0,\delta^{-2}\>$ is obviously finite. For $\La>\delta^{-2}$ we split the rhs into two terms and estimate:
\begin{gather*}
 \|(E_\delta-E_{\La^{-1/2}})|R|(1+\La^2 R^2)^{-2}\Psi\W\|\leq \delta\|E_\delta\Psi\W\|(1+\La)^{-2}\,,\label{Ed}\\[1ex]
 \|E_{\La^{-1/2}}|R|(1+\La^2 R^2)^{-2}\Psi\W\|\leq \con \|(E_{\La^{-1/2}}-E_0)\Psi\W\|\,\La^{-1}\,,\label{EL}
\end{gather*}
where the constant on the rhs of the latter estimate is equal to the maximum of the function \mbox{$|s|(1+s^2)^{-2}$}, $s\in\mR$. Integrating this latter estimate one obtains the implication \eqref{singul}.

\paragraph{} 
\label{hrasst}
Let the above spectral condition be satisfied for all $\Psi\in\K_\pm$. Then for $f_i$ with disjoint supports, and supports of $h_i$ adjusted as in Paragraph \ref{hrcom}, there exist strong limits
\begin{equation}\label{many}
 \mathrm{s}\!-\!\lim_{\La\to\infty}\Psi_{1\La}[f_1]\ldots\Psi_{n\La}[f_n]\W=\mathrm{s}\!-\!\lim_{\La\to\infty}\Psi'_{1\La}[f_1]\ldots\Psi'_{n\La}[f_n]\W\,.
\end{equation}
These limits depend only on the one-operator asymptotic vectors
\begin{equation}\label{one}
 \mathrm{s}\!-\!\lim_{\La\to\infty}\Psi'_{\La}[f]\W=(2\pi)^2f(P/m)E_0\Psi\W\,.
\end{equation}
Thus the structure is nontrivial if, and only if, $E_0\neq0$, i.e.\ there is a discrete mass hyperboloid in the spectrum of energy-momentum, and there exist $\Psi$ which interpolate between $\W$ and $E_0\Hc$.

The existence of the limit is shown by standard Cook method, with the use of Paragraphs \ref{hrcom} and \ref{hrspec}. The one-operator limit is rather obvious, and then the second part is shown by (anti-)commuting (with the use of \eqref{limcom2}) a~particular operator to the right, to stand at the vector $\W$.

\paragraph{} 
\label{hreta}
To obtain the Fock structure of asymptotic states one needs two additional elements. The first is the following generalization. For $\eta\in(0,1\>$ let $s(\La)=(m\La)^\eta/m$ (so that $s$ is a length and $s=\La$ for $\eta=1$).  Denote
\begin{equation}
 h^\eta_\La(\la)=s^{-1}h(s^{-1}(\la-\La)+1)
 \end{equation}
and define $\Psi_\La^\eta[f]$ similarly as $\Psi_\La[f]$, with $h_\La$ replaced by $h^\eta_\La$ (so, in particular $h_\La=h^1_\La$ and $\Psi_\La^1[f]=\Psi_\La[f]$). Then for $\eta<1$ there is
\begin{equation}\label{LL'}
 \mathrm{s}\!-\!\lim_{\La\to\infty}\Psi^\eta_{1\La}[f_1]\ldots\Psi^\eta_{n\La}[f_n]\W= \mathrm{s}\!-\!\lim_{\La\to\infty}\Psi_{1\La}[f_1]\ldots\Psi_{n\La}[f_n]\W\,,
\end{equation}
with arbitrary choice of functions $h_i$ on the lhs, while on the rhs the supports of functions $h_i$ respect the demands of Paragraph \ref{hrasst}.

To justify this we note that by the results of Paragraph \ref{hrasst} we are free to choose the support of functions $h_i$ on the rhs in arbitrarily small interval \mbox{$\<1-\delta,1+\delta\>$}. Then by inspection of the proof of Proposition \ref{commut} one can show that limits \eqref{limcom2} and \eqref{limcom3} are valid for any choice of single $\Psi$-operators appearing on both sides of the above relation (or their adjoints), if $\delta$ is sufficiently small. Then the proof of \eqref{LL'} follows the idea by Buchholz (\cite{dy05}, Lemma 2.4) (with a minor necessary refinement: as all fields $\Psi^\eta_\La[f]$ act on vector states of bounded energy, their norms may be assumed bounded).

\paragraph{} 
\label{hrasum}
The second element used for the derivation of the Fock structure is the following assumption. We denote by $E_\W^\bot$ the projection onto the orthogonal complement of the linear span of $\W$. We also make the following observation and introduce the function $K$:
\begin{multline}\label{4psi}
 (\W,[\Psi_1(x_1),\Psi_2(x_2)]_\pm E_\W^\bot[\Psi_3(x_3),\Psi_4(x_4)]_\pm\W)\\
 =(\W,B_{12}(x_1-x_2)E_\W^\bot  U\big(-\tfrac{1}{2}(x_1+x_2-x_3-x_4)\big)B_{34}(x_3-x_4)\W)\\
 \equiv K(x_1-x_2,x_3-x_4,\tfrac{1}{2}(x_1+x_2-x_3-x_4))\,,
\end{multline}
where $B_{ij}(z)=[\Psi_i(z/2),\Psi_j(-z/2)]_\pm$.
\begin{ass}\label{cluster}
Let $\Psi_i\in\K_\pm$, $i=1,\ldots,4$, and $N$ be any positive integer. Then for large enough, positive $d$, and
\begin{equation}
 |y_1|\leq d\,,\quad |y_2|\leq d\,,\quad |\vec{y}|\geq|y^0|+c_1d\,,\\
\end{equation}
the following estimate holds
\begin{equation}
 |K(y_1,y_2,y)|\leq c_2\frac{d^M}{(|\vec{y}|-|y^0|)^\vep}+c_3d^{-N}\,,
\end{equation}
and the positive constants $c_i$, $M$ and $\vep$ do not depend on $d$.

The assumption is covariant: if it holds in any particular reference system, it is valid in all other, with some other constants $c_i$.
\end{ass}

\begin{pr}\label{clusterchi}
Assumption \ref{cluster} is closed with respect to smearing of fields $\Psi$ with Schwartz functions; more precisely, it remains valid, with some other constants $c_i$,  under replacement $\Psi_i\rightarrow\Psi_i(\chi_i)$.
\end{pr}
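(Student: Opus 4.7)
My plan is to expand $K^\chi$ as a convolution-type integral of the original $K$ against a product of Schwartz weights, and then to split the integration domain in the spirit of the proof of Proposition~\ref{boundchi}: on a bounded inner region Assumption~\ref{cluster} for $K$ applies after a small shift of arguments, while on the outer tail the Schwartz decay of the $\chi_i$ beats a trivial norm bound on $K$.

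First I would use $\Psi_i(\chi_i)(x_i)=\int\chi_i(z_i)\Psi_i(x_i+z_i)\,dz_i$ to expand both commutators in the defining expression for $K^\chi$ and to interchange integrations, arriving at
\begin{equation*}
 K^\chi(a_1,a_2,a)=\int\prod_{i=1}^{4}\chi_i(z_i)\,K\bigl(a_1+z_1-z_2,\,a_2+z_3-z_4,\,a+\tfrac{1}{2}(z_1+z_2-z_3-z_4)\bigr)\,dZ,
\end{equation*}
with $Z=(z_1,\dots,z_4)$, where the arguments of $K$ under the integral are shifts of $(a_1,a_2,a)$ controlled linearly by $|Z|$.

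Fix $N$. For a constant $C$ large and $c_1'>c_1$ suitable (both to be chosen), assume $|a_1|,|a_2|\le d$ and $A:=|\vec{a}|-|a^0|\ge c_1'd$, and split the $Z$-integration at $|Z|=d/C$. On the inner region $|Z|\le d/C$ elementary estimates give $|y_i'|\le d(1+2/C)$ for the shifted first two arguments and $|\vec{y}\,'|-|y'^0|\ge A-4d/C$ for the shifted third. Choosing $c_1'$ slightly above $c_1$ and then $C$ large enough secures both (a)~$|\vec{y}\,'|-|y'^0|\ge c_1 d^*$ with $d^*:=d(1+2/C)\le 2d$, so that Assumption~\ref{cluster} applies to $K$ at $(y_1',y_2',y')$, and (b)~$|\vec{y}\,'|-|y'^0|\ge A/2$. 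The assumption then gives $|K(y_1',y_2',y')|\le \con\, d^M/A^\vep+\con\, d^{-N}$ uniformly on this region, and integration against $\prod|\chi_i|\in L^1$ preserves the form of the bound.

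On the outer region $|Z|>d/C$ I would use the trivial bound $|K|\le 4\prod\|\Psi_i\|$ together with the Schwartz decay estimate $\int_{|Z|>d/C}\prod|\chi_i(z_i)|\,dZ\le\con\, d^{-(n-4)}$, valid for any $n>4$; choosing $n\ge N+4$ makes this contribution $O(d^{-N})$. Adding the two pieces yields the bound of Assumption~\ref{cluster} for $K^\chi$ with new constants but the same $M$ and $\vep$, valid for all $d$ large enough that $d^*$ exceeds the threshold governing the original $K$. The only real obstacle is the bookkeeping in the inner-region step -- checking that both the size restriction $|y_i'|\le d^*$ and the spacelike-separation restriction $|\vec{y}\,'|-|y'^0|\ge c_1 d^*$ survive the $O(d/C)$ shifts, which forces $c_1'-c_1$ to be of order $1/C$.
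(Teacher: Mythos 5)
Your proposal is correct and follows essentially the same route as the paper: the paper likewise rewrites the smeared four-point function as a convolution $K*\varphi$ with a Schwartz weight (it merely pushes your four $z_i$ forward into the three effective shift variables first), splits the integration into an inner region where Assumption \ref{cluster} applies to the shifted arguments at a comparable scale and an outer tail killed by Schwartz decay against the trivial norm bound on $K$. The bookkeeping differs only in normalization (the paper takes the hypothesis at scale $d'=d/2$ and applies the assumption at scale $d$, rather than your $d$ versus $d^*=d(1+2/C)$), which is immaterial.
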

\begin{proof}
 Replacing $\Psi_i$ by $\Psi_i(\chi_i)$ in \eqref{4psi} amounts to the replacement $K\rightarrow K*\vp$,
\begin{equation*}
 (K*\vp)(y_1,y_2,y)=\int K(y_1-z_1,y_2-z_2,y-z)\varphi(z_1,z_2,z)\,dz_1dz_2dz\,,
\end{equation*}
where $\varphi$ is a Schwartz function. Let
\begin{equation*}
 |y_1|\leq d'\,,\ |y_2|\leq d'\,,\ |\vec{y}|-|y^0|\geq c'_1d'\,,\
 |z_1|\leq d'\,,\ |z_2|\leq d'\,,\ |z|\leq d'
\end{equation*}
-- integration over the rest of the domain of $z_i,z$-variables gives a $d'{}^{-N}$ contribution. Set $d'=d/2$ and choose $c'_1\geq\max\{4, 4c_1\}$. Then
\begin{gather*}
 |y_1-z_1|\leq d\,,\quad |y_2-z_2|\leq d\,,\\
 |\vec{y}-\vec{z}|-|y^0-z^0|\geq \tfrac{1}{2}(|\vec{y}|-|y^0|)+\tfrac{1}{2}(|\vec{y}|-|y^0|-4d')\geq \tfrac{1}{2}(|\vec{y}|-|y^0|)\geq c_1d\,,
\end{gather*}
so by Assumption \ref{cluster}, in this region,
\begin{multline*}
 |K(y_1-z_1,y_2-z_2,y-z)|\leq \frac{c_2\,d^M}{(|\vec{y}-\vec{z}|-|y^0-z^0|)^\vep}+c_3d^{-N}\\
 \leq \frac{2^\ep c_2\,d^M}{(|\vec{y}|-|y^0|)^\vep}+c_3d^{-N}
\end{multline*}
which is sufficient to conclude the proof.
\end{proof}

Assumption \ref{cluster} holds, in particular, for local and almost local fields, as shown by Propositions \ref{decay} (in Appendix \ref{deccom}) and \ref{clusterchi}. Its derivation in general case from Assumption \ref{com} or similar condition is an open problem. We note, however, that the decay it assumes may be very slow (any $\vep>0$).

\paragraph{} 
\label{hrgeom}
We shall need below the following simple geometrical facts. For \mbox{$v_i\in H_+$}, $|\vec{v}_i|\leq\nu$, $\beta=\nu/\sqrt{\nu^2+1}$, and $\la_i>0$, one has
\begin{gather}
 |\la_1v_1^0-\la_2v_2^0|\leq|\la_1-\la_2|+\beta|\la_1\vec{v}_1-\la_2\vec{v}_2|\,,\label{vec0}\\[1ex]
 |\la_1\vec{v}_1-\la_2\vec{v}_2|-|\la_1v_1^0-\la_2v_2^0|\geq (1-\beta)|\la_1\vec{v}_1-\la_2\vec{v}_2|-|\la_1-\la_2|\,.
\end{gather}
If in addition $|\la_1-\la_2|\leq\sigma$, $|\la_1\vec{v}_1-\la_2\vec{v}_2|\geq 2\sigma/(1-\beta)$, then
\begin{equation}
 |\la_1\vec{v}_1-\la_2\vec{v}_2|-|\la_1v_1^0-\la_2v_2^0|\geq \tfrac{1}{2}(1-\beta)|\la_1\vec{v}_1-\la_2\vec{v}_2|\geq\sigma\,.
\end{equation}

Inequality \eqref{vec0} follows from $|\vec{v}_i|/v_i^0\leq\beta$ and
\begin{multline*}
 |\la_1v_1^0-\la_2v_2^0|=\frac{|\la_1^2-\la_2^2+(\la_1\vec{v}_1+\la_2\vec{v}_2)\cdot(\la_1\vec{v}_1-\la_2\vec{v}_2)|}{\la_1v_1^0+\la_2v_2^0}\\
 \leq\frac{|\la_1^2-\la_2^2|}{\la_1+\la_2}+\frac{\la_1|\vec{v}_1|+\la_2|\vec{v}_2|}{\la_1v_1^0+\la_2v_2^0}\,|\la_1\vec{v}_1-\la_2\vec{v}_2|\,,
\end{multline*}
and the other are its consequences.

\paragraph{} 
\label{hr2psi}
With the running assumptions, for sufficiently small $\eta$, there is
\begin{equation}\label{two}
 \lim_{\La\to\infty}\Psi_{1\La}^\eta[f_1]^*\Psi_{2\La}^\eta[f_2]\W=(2\pi)^4(\Psi_1\W,(\ov{f_1}f_2)(P/m)E_0\Psi_2\W)\,\W\,.
\end{equation}
The projection of this equality onto $\W$ follows from relations \eqref{one} and \eqref{LL'}. Thus the relation will be true, if
\begin{equation}
 \lim_{\La\to\infty}\big\|E_\W^\bot\big[\Psi_{1\La}^\eta[f_1]^*,\Psi_{2\La}^\eta[f_2]\big]_\pm\W\big\|=0\,.
\end{equation}
The latter is a consequence of the next lemma. We denote
\begin{equation}
\begin{split}
 \supp h&\subseteq\<\tau_1,\tau_2\>\subset(0,\infty)\,,\\
 \supp h^\eta_\La&\subseteq\<\La_1,\La_2\>=\<\La+(\tau_1-1)s, \La+(\tau_2-1)s\>\,,\\
 \La_2-\La_1&=(\tau_2-\tau_1)s=\tau s\,.
 \end{split}
\end{equation}
\begin{lem}
\begin{equation}
 \lim_{\stackrel{\La\to\infty}{\la_1,\la_2\in\<\La_1,\La_2\>}}\big\|E_\W^\bot\big[\Psi_1(\schi_1)[\la_1,f_1]^*,\Psi_2(\schi_2)[\la_2,f_2]\big]_\pm\W\big\|=0\,.
\end{equation}
\end{lem}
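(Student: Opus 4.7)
My plan is to bound the operator norm of the (anti-)commutator itself — since $E_\W^\bot$ is a contraction and $\|\W\|=1$, this dominates the norm in the lemma — and for this I would invoke Theorem~\ref{limit}(ii) directly, after a cosmetic rewriting. No truncated-four-point/cluster analysis is needed at this stage.

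First, using the adjoint identity \eqref{conj}, I would write $\Psi_1(\schi_1)[\la_1,f_1]^* = \Psi_1^*(\ov{\schi_1})[\la_1,\ov{f_1}]$. Since $\K_\pm$ is closed under conjugation, $\Psi_1^*\in\K_\pm$, and Proposition~\ref{boundchi} then ensures that both bounded operators $\Psi_1^*(\ov{\schi_1})$ and $\Psi_2(\schi_2)$ still obey the commutator bound of Assumption~\ref{com} with the same exponent $\kappa>3$. The supports $\supp\ov{f_1}=\supp f_1$ and $\supp f_2$ remain disjoint compacts in $H_+$, so $\cosh\gamma_{12}=\inf\{v_1\cdot v_2:v_i\in\supp f_i\}>1$ and $\gamma_{12}>0$. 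For $\la_1,\la_2\in\<\La_1,\La_2\>$ one has $\La_2-\La_1=\tau s=\tau(m\La)^\eta/m$ with $\eta<1$, hence $|\la_1/\la_2-1|\leq \tau s/\La_1\to 0$ as $\La\to\infty$ (because $\La_1\sim\La$); thus, for any fixed $\gamma\in(0,\gamma_{12})$, the ratio satisfies $\exp(-\gamma)\leq \la_1/\la_2\leq \exp(\gamma)$ uniformly in $\la_i\in\<\La_1,\La_2\>$ once $\La$ is large enough.

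Theorem~\ref{limit}(ii) then yields, uniformly in such $\la_i$,
\begin{equation*}
\big\|[\Psi_1(\schi_1)[\la_1,f_1]^*,\Psi_2(\schi_2)[\la_2,f_2]]_\pm\big\|\leq \frac{\con}{(\la_1\la_2)^{(\kappa-3)/2}}\leq \con\,\La^{-(\kappa-3)},
\end{equation*}
which tends to zero since $\kappa>3$; the bound $\|E_\W^\bot[\cdot,\cdot]_\pm\W\|\leq\|[\cdot,\cdot]_\pm\|$ finishes the argument. There is essentially no obstacle: Assumption~\ref{cluster} plays no role for this two-operator statement, being reserved for the nested three-operator commutator \eqref{limcom3} that underlies the higher-$n$ Fock-space construction. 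The only substantive verification is that the spread $|\la_1-\la_2|=O(\La^\eta)$ is asymptotically negligible compared with $\La$, which is immediate from $\eta<1$ and is exactly the reason why the parameter $\eta$ was introduced in Paragraph~\ref{hreta}.
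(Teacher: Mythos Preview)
Your argument has a genuine gap: you are tacitly assuming that $\supp f_1\cap\supp f_2=\varnothing$, but the lemma is stated (and, more importantly, is \emph{needed}) precisely in the opposite situation. Look at formula \eqref{two} just above the lemma: its right-hand side contains the factor $(\ov{f_1}f_2)(P/m)$, which is identically zero whenever the supports of $f_1$ and $f_2$ are disjoint. The whole purpose of Paragraph~\ref{hr2psi} is to produce the nontrivial ``contraction'' $\Psi_{1\La}^\eta[f_1]^*\Psi_{2\La}^\eta[f_2]\W\to\text{scalar}\cdot\W$ for overlapping $f_1,f_2$, and this is what feeds the Fock-space computation in the final paragraph. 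Theorem~\ref{limit}(ii), which you invoke, has disjoint supports as an explicit hypothesis; without it $\gamma_{12}=0$ and the bound \eqref{disj} is vacuous.

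Consequently your dismissal of Assumption~\ref{cluster} is backwards. In the paper's proof the integration over $(v_1,v_2)$ is split: the region where $\la_1\vec v_1-\la_2\vec v_2$ is large is handled by the commutator decay (Assumption~\ref{com}), exactly as you propose, but the near-diagonal region $|\la_1\vec v_1-\la_2\vec v_2|\lesssim s$ survives because the supports overlap. There the norm squared is written as a four-point function with $E_\W^\bot$ inserted, and Assumption~\ref{cluster} is what forces the spacelike decay of this truncated correlator; the competition between the growing volume factors (powers of $s$) and this decay is precisely why one needs ``sufficiently small $\eta$''. Note also that \eqref{limcom3} in Paragraph~\ref{hrcom} does \emph{not} use Assumption~\ref{cluster} --- it is the present two-operator lemma, not the three-operator Jacobi-type identity, that requires the cluster property.
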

\begin{proof}
To simplify notation we write $\Psi_i(\schi_i)=\Psi^\diamond_i$ and recall that both Assumptions \ref{com} and \ref{cluster} remain valid for $\Psi_i^\diamond$. The vector inside the norm signs involves integration over \mbox{$v_1,v_2\in H_+$}, $|\vec{v}_i|\leq\nu$, for some $\nu>0$. We divide this domain into two parts: (i) $|\la_1\vec{v}_1-\la_2\vec{v}_2|\leq 2\tau s/(1-\beta)$,  and (ii) the rest. In region (ii) we change variables $\vec{v}_1$ to \mbox{$\vec{w}=\la_1\vec{v}_1-\la_2\vec{v}_2$}. Then the norm of this part is bounded by
\begin{equation*}
 \con\,(\la_2/\la_1)^{3/2}\int\limits_{|\vec{w}|\geq2\tau s/(1-\beta)}\frac{d^3w}{(r+|\vec{w}|)^\kappa}\leq\con\,s^{-(\kappa-3)}\,,
\end{equation*}
where we used relations of Paragraph \ref{hrgeom} (with $\sigma=\tau s$) and the decay of the commutator,  Assumption~\ref{com}. The norm squared of part (i) is bounded by
\begin{multline*}
 \con\,(\la_1\la_2)^3\int d\mu(v_1)\ldots d\mu(v_4)\times\\
 \times|(\W,[\Psi^\diamond_1(\la_1v_1),\Psi^\diamond_2(\la_2v_2)^*]_\pm E_\W^\bot[\Psi^\diamond_1(\la_1v_3)^*,\Psi^\diamond_2(\la_2v_4)]_\pm\W)|\,,
\end{multline*}
where integration is restricted to $|\vec{v}_i|\leq\nu$, $|\la_1\vec{v}_1-\la_2\vec{v}_2|\leq 2\tau s/(1-\beta)$, $|\la_1\vec{v}_3-\la_2\vec{v}_4|\leq 2\tau s/(1-\beta)$. Again, we divide this domain into two regions: (iii) $\tfrac{1}{2}|\la_1\vec{v}_1+\la_2\vec{v}_2-\la_1\vec{v}_3-\la_2\vec{v}_4|\leq 4\alpha\tau s$, with the constant $\alpha$ to be specified below, and (iv) the rest. The integral over (iii) is easily seen (by a change of variables) to be bounded by $\con\,s^9/\La^3$, which vanishes in the limit for $\eta<1/3$. Finally, we consider the region (iv). First, we note that
\begin{multline*}
 (\la_1v_1+\la_2v_2)^2=2\la_1^2+2\la_2^2-(\la_1v_1-\la_2v_2)^2\\[1ex] \leq 4\La_2^2 +|\la_1\vec{v}_1-\la_2\vec{v}_2|^2
 \leq 4\La_2^2 +4\tau^2s^2/(1-\beta)^2\,,
\end{multline*}
and for sufficiently large $\La$ this is bounded by $4(\La_2+\tau s)^2$. The same holds for $\la_1v_3+\la_2v_4$. Therefore
\begin{equation*}
 \tfrac{1}{2}(\la_1v_1+\la_2v_2)=\xi_1u_1\,,\quad \tfrac{1}{2}(\la_1v_3+\la_2v_4)=\xi_2u_2\,,
\end{equation*}
where $u_1,u_2\in H_+$,  $\La_1\leq\xi_i\leq\La_2+\tau s=\La_1+2\tau s$, and $|\vec{u}_i|\leq \La_2\nu/\La_1\leq\tau_2\nu/\tau_1\equiv\nu'$ (for $\La\geq 1/m$). We put $\beta'=\nu'/\sqrt{{\nu'}^2+1}$ and $\al=\gamma/(1-\beta')$, with $\gamma\geq1$ to be further specified. We note that now
\begin{equation*}
 |\xi_1-\xi_2|\leq 2\tau s\leq 2\gamma\tau s\,, \quad |\xi_1\vec{u}_1-\xi_2\vec{u}_2|\geq 4\gamma\tau s/(1-\beta')\,.
\end{equation*}
Thus using again the relations of Paragraph~\ref{hrgeom}, with $\sigma=2\gamma\tau s$ and $\beta\to\beta'$, we find
\begin{equation}\label{xivec0}
 |\xi_1\vec{u}_1-\xi_2\vec{u}_2|-|\xi_1u_1^0-\xi_2u_2^0|\geq
 \tfrac{1}{2}(1-\beta')|\xi_1\vec{u}_1-\xi_2\vec{u}_2|
 \geq2\gamma\tau s\,.
\end{equation}
Moreover, with the use of relation \eqref{vec0} one obtains
\begin{equation*}
 |\la_1v_1-\la_2v_2|\leq bs\,,\quad |\la_1v_3-\la_2v_4|\leq bs\,,
\end{equation*}
where $b=\sqrt{8}\tau/(1-\beta)$. It is now visible that for large enough $\La$ (and consequently $s$), with $d=bs$ and $\gamma$ chosen large enough to satisfy $2\gamma\tau/b=\gamma(1-\beta)/\sqrt{2}>c_1$, the premisses of Assumption \ref{cluster} are satisfied. Therefore, in this region
\begin{multline*}
 |K(\la_1v_1-\la_2v_2,\la_1v_3-\la_2v_4,\tfrac{1}{2}(\la_1v_1+\la_2v_2-\la_1v_3-\la_2v_4))|\\
 \leq\con\frac{s^M}{|\la_1\vec{v}_1+\la_2\vec{v}_2-\la_1\vec{v}_3-\la_2\vec{v}_4|^\vep}+\con\, s^{-N}
\end{multline*}
(use also the first inequality in \eqref{xivec0}). We change the variables $\vec{v}_1$, $\vec{v_2}$ and $\vec{v}_3$ to $\vec{w}_1=\la_1\vec{v}_1-\la_2\vec{v}_2$, $\vec{w}_2=\la_1\vec{v}_3-\la_2\vec{v}_4$ and $\vec{w}=\la_1\vec{v}_1+\la_2\vec{v}_2-\la_1\vec{v}_3-\la_2\vec{v}_4$ and note that $|\vec{w}_i|\leq 2\tau s/(1-\beta)$ and $8\gamma\tau s/(1-\beta')\leq|\vec{w}|\leq 4\La_2\nu$. Thus the integral over region (iv) is bounded by
\begin{equation*}
 \con\, s^6\La_1^{-3}\int\limits_{8\gamma\tau s/(1-\beta')}^{4\La_2\nu}\Big(\frac{s^M}{|\vec{w}|^\vep}+s^{-N}\Big)|\vec{w}|^2\,d|\vec{w}|
 \leq\con\,\big(s^{M+6}\La^{-\vep'}+s^{6-N}\big)\,,
\end{equation*}
where $\vep'=\min\{\vep,3\}$. This vanishes in the limit, if $N>6$ and \mbox{$\eta<\vep'/(M+6)$}.
\end{proof}

\paragraph{} 
The Fock structure of the products
\begin{equation}
 \lim_{\La\to\infty}(\Psi^\eta_{1\La}[f_1]\ldots\Psi^\eta_{k\La}[f_k]\W,\Psi_{(k+1)\La}^\eta[f_{k+1}]\ldots\Psi_{n\La}^\eta[f_{n}]\W)
\end{equation}
is now easily obtained by transferring the operators from the left to the adjoints on the right, commuting them to far right and using \eqref{two} (see \cite{dy05} for details of the technique).

\section{Conclusions}

We have introduced an asymptotic limiting of fields based on averaging over hyperboloids rather than constant time hyperplanes. If a class of fields satisfies a~rather slow spacelike decay condition of their (anti-)commutators, then their asymptotic behavior is naturally related to their spectral properties with respect to energy-momentum. In that case the asymptotic behavior admits a condition which generalizes the condition of the existence of a discrete mass hyperboloid in the energy-momentum spectrum in the vacuum representation. The resulting asymptotic fields transfer energy-momentum with sharp Lorentz square, interpreted as mass squared of a particle. With some stronger assumptions on the asymptotic limiting the asymptotic fields satisfy fermionic/bosonic statistics (but not the Fock structure).

The question whether the scheme will have relevance for realistic quantum electrodynamics is an open problem. However, a model proposed some time ago as an algebra of asymptotic fields in electrodynamics, in which Gauss' law is respected, fits into the scheme. It is an important problem for future research to find more general conditions for non-vanishing of the asymptotic limit fields as defined in the present paper.

The ideas at the base of these constructions were also put to a slightly different use to generalize the Haag-Ruelle scattering theory. It was shown that they allow some sharpening of results, while at the same time substantially relaxing assumptions on spacelike decay properties.

In the article only outgoing fields were considered, but incoming case strictly parallels these constructions.

\section*{Appendix}
\setcounter{section}{0}
\renewcommand{\thesection}{\Alph{section}}

\section{On regular wave packets}\label{regwave}

Here some properties of wave packets are discussed in a sharper form needed in this article, than usually considered.

Let $f$ be a smooth function on $H_+$, of compact support. Then for \mbox{$v\in H_+$}, $\rho>0$, there is
\begin{equation}\label{regwp}
 \int f(u)e^{i\rho v\cdot u}\,d\mu(u)
 =e^{i3\pi/4}\lp\frac{2\pi}{\rho}\rp^{3/2}e^{i\rho}\,
 \bigg(\sum_{k=0}^N\rho^{-k}L_kf(v)+O(\rho^{-N-1})\bigg)\,,
\end{equation}
where $L_0=\id$ and $L_k$ for $k\geq1$ are differential operators with smooth coefficient functions. The bound of the rest inside the parentheses, and of its $v$-derivatives, is uniform on each compact set of $v$'s. Moreover, each differentiation of the rest with respect to $\rho$ increases its decay rate by one power, with preserved uniformity. This follows from direct application of the stationary phase method, as presented in \cite{va89} (or, in a somewhat less explicit way, in \cite{rs79}). Note that the lhs of \eqref{regwp} is a~regular wave packet in the vector variables $\rho v$ covering $V_+$.

Choose now $f_j$, $j=0,\ldots,N$, with $f_0=f$, and other $f_j$ with similar regularity properties, and substitute in the above formula $f\to f_j$, \mbox{$N\to N-j$}. Combining the resulting formulae one finds
\begin{multline*}
 \sum_{j=0}^N \rho^{-j}\int f_j(u)e^{i\rho v\cdot u}\,d\mu(u)\\
 =e^{i3\pi/4}\lp\frac{2\pi}{\rho}\rp^{3/2}e^{i\rho}
 \bigg(\sum_{k=0}^N\rho^{-k}\sum_{j=0}^k L_{k-j}f_j(v)+O(\rho^{-N-1})\bigg)\,.
\end{multline*}
Putting now recursively $\dsp f_k=-\sum_{j=0}^{k-1}L_{k-j}f_j$ for $k=1,\ldots,N$ one obtains
\begin{equation*}
 \sum_{j=0}^N \rho^{-j}\int f_j(u)e^{i\rho v\cdot u}\,d\mu(u)\\
 =e^{i3\pi/4}\lp\frac{2\pi}{\rho}\rp^{3/2}e^{i\rho}
 \big(f(v)+O(\rho^{-N-1})\big)\,,
\end{equation*}
where the rest inside the parenthesis has the same properties as that in \eqref{regwp}.
\begin{pr}\label{exp}
Let $\wh{\chi}$ be a smooth real function on $M$ with compact support contained inside the future lightcone, and denote $\wh{\chi_j}(p)=\wh{\chi}(p)(p^2)^{3/4-j/2}$, $j=0,\ldots,N$.
Then, with standing assumptions and notation, for $\la>0$, there is
\begin{multline}\label{fexp}
 e^{-i3\pi/4}\sum_{j=0}^N \la^{-j}\Big(\frac{\la}{2\pi}\Big)^{3/2}\wh{\chi_j}(p)\int f_j(u)e^{i\la p\cdot u}\,d\mu(u)\\
 =e^{i\la\sqrt{p^2}}\wh{\chi}(p)f\big(p/\sqrt{p^2}\big)+\wh{R_\la}(p)\,,
\end{multline}
where $\wh{R_\la}$ is smooth, of compact support, and satisfies the bounds
\begin{equation}\label{Rh}
 |D^\al\wh{R_\la}(p)|\leq\con\, \la^{-N-1+|\al|}\,.
\end{equation}
The latter bounds imply
\begin{equation}\label{R}
 \int|R_\la(x)|dx\leq\con\,\la^{-N+3/2}\,.
\end{equation}
\end{pr}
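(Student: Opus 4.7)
The plan is to apply the asymptotic identity displayed just above the statement of Proposition \ref{exp}, pointwise in $p$, with $\rho=\la\sqrt{p^2}$ and $v=p/\sqrt{p^2}$. This is legitimate since $\supp\wh{\chi}$ is compact in $V_+$, so $p^2$ is bounded away from $0$ there and the map $p\mapsto(\sqrt{p^2},p/\sqrt{p^2})$ is smooth with bounded derivatives of all orders on that support. With this substitution $\rho\,v\cdot u=\la\,p\cdot u$, and the preceding identity rearranges directly into a candidate for \eqref{fexp}.

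Explicitly, I would multiply the substituted identity by $e^{-i3\pi/4}(\la/(2\pi))^{3/2}(p^2)^{3/4}\wh{\chi}(p)$. On the left, the product $(p^2)^{3/4}(\la\sqrt{p^2})^{-j}=\la^{-j}(p^2)^{3/4-j/2}$ turns $\wh{\chi}(p)$ into $\la^{-j}\wh{\chi_j}(p)$, reproducing the left-hand side of \eqref{fexp}. On the right, the prefactor combines as $(\la/(2\pi))^{3/2}(2\pi/\rho)^{3/2}=(p^2)^{-3/4}$, which cancels the extracted $(p^2)^{3/4}$; the phases $e^{\pm i3\pi/4}$ cancel; and one is left with $e^{i\la\sqrt{p^2}}\wh{\chi}(p)f(p/\sqrt{p^2})$ as the principal term plus a remainder $\wh{R_\la}(p)=\wh{\chi}(p)\,e^{i\la\sqrt{p^2}}r_\la(p)$, where $r_\la$ is the $O(\rho^{-N-1})$ rest of the stationary-phase expansion evaluated at $\rho=\la\sqrt{p^2}$, $v=p/\sqrt{p^2}$.

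The derivative bound \eqref{Rh} then follows by distributing $D^\al$ over the three factors of $\wh{R_\la}$. Because $r_\la(v,\rho)$ is $O(\rho^{-N-1})$ together with its $v$-derivatives, and $\rho$-differentiation improves the decay by one power, and because $p\mapsto(v,\rho)$ is smooth on $\supp\wh{\chi}$, all $p$-derivatives of $r_\la(v(p),\rho(p))$ remain $O(\la^{-N-1})$. The only source of $\la$-growth is $e^{i\la\sqrt{p^2}}$: each of its $p$-derivatives produces a bounded multiplier times one power of $\la$, giving $|D^\al\wh{R_\la}(p)|\leq\con\,\la^{-N-1+|\al|}$. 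For the $L^1$ estimate \eqref{R}, I would combine Plancherel with integration-by-parts decay in $x$. Compactness of $\supp\wh{R_\la}$ and the $|\al|=0$ case of \eqref{Rh} give $\|R_\la\|_2\leq\con\,\la^{-N-1}$; the identity $x^\al R_\la(x)=(-i)^{|\al|}\wch{D^\al\wh{R_\la}}(x)$ combined with \eqref{Rh} yields $\||x|^k R_\la\|_\infty\leq\con\,\la^{-N-1+k}$ for every integer $k$. Splitting $\int|R_\la|\,dx$ at $|x|=M$, the inner part is bounded by Cauchy--Schwarz by $\con\,\la^{-N-1}M^2$ and the outer part by $\con\,\la^{-N-1+k}M^{4-k}$ for $k>4$. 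Choosing $k$ large enough and optimizing in $M$ (for instance $k=10$, $M=\la^{5/4}$) delivers exactly the claimed bound $\con\,\la^{-N+3/2}$.

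The main obstacle is the opening bookkeeping: matching the $(p^2)^{3/4-j/2}$ powers defining $\wh{\chi_j}$ with the $\rho^{-j}$ coefficients of the expansion after the substitution, and confirming that the chain rule through $v=p/\sqrt{p^2}$ introduces no extra $\la$-dependence in $r_\la(v(p),\rho(p))$ beyond what is already carried by $e^{i\la\sqrt{p^2}}$. Once these algebraic and chain-rule points are in order, both the derivative bound and the ensuing $L^1$ estimate reduce to routine Cauchy--Schwarz / integration-by-parts interpolation.
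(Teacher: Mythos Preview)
Your derivation of \eqref{fexp} and of the bounds \eqref{Rh} via the substitution $\rho=\la\sqrt{p^2}$, $v=p/\sqrt{p^2}$ is exactly the paper's approach (the paper simply records that everything but \eqref{R} ``follows directly from the preceding discussion'' with this substitution); your bookkeeping of the $(p^2)^{3/4-j/2}$ factors and your chain-rule remark that $p$-derivatives of $r_\la(v(p),\rho(p))$ stay $O(\la^{-N-1})$ (the $\rho$-derivative losing one power being compensated by $\partial_p\rho\propto\la$) are correct.

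For the $L^1$ bound \eqref{R} your argument is correct but differs from the paper's. You split at $|x|=M$, combine the $L^2$ norm on the inner ball with the $L^\infty$ bound $\||x|^kR_\la\|_\infty\leq\con\,\la^{-N-1+k}$ on the exterior, and optimize in $M$. The paper instead deduces from \eqref{Rh}, via Plancherel, the weighted $L^2$ bounds $\int|R_\la(x)|^2|x|^k\,dx\leq\con\,\la^{-2N-2+k}$ for all integers $k\geq0$ (even $k$ directly; odd $k$ by Schwarz between $k-1$ and $k+1$), and then applies Cauchy--Schwarz once with the factorization $|R_\la(x)|=|R_\la(x)|(|x|^5+1)^{1/2}\cdot(|x|^5+1)^{-1/2}$, using that $(|x|^5+1)^{-1}$ is integrable in four dimensions. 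Both routes are standard interpolation arguments yielding the same exponent; the paper's avoids the optimization parameter, while yours avoids the odd-$k$ interpolation step.
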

\begin{proof}
All statements, except for the last estimate, follow directly from the preceding discussion, with $v=p/\sqrt{p^2}$ and $\rho=\la\sqrt{p^2}$. To show \eqref{R}, we note that for $k=0,1,2,\ldots$ the bounds \eqref{Rh} imply
\begin{equation*}
 \int|R_\la(x)|^2|x|^kdx \leq \con\, \la^{-2N-2+k}\,.
\end{equation*}
For even $k$ this follows directly by the Plancherel formula, and then for $k=2l+1$ by writing $|R_\la(x)|^2|x|^{2l+1}=|R_\la(x)||x|^l\times|R_\la(x)||x|^{l+1}$ and using the Schwartz inequality. Writing
\begin{equation*}
 |R_\la(x)|=|R_\la(x)|(|x|^5+1)^{1/2}\times(|x|^5+1)^{-1/2}
\end{equation*}
and using the Schwartz inequality we arrive at \eqref{R}.
\end{proof}

\section{Decay of correlations of commutators -- local case}\label{deccom}

Here we adapt an estimate due to Araki, Hepp and Ruelle \cite{ar62} to obtain the following result.
\begin{pr}\label{decay}
Let $\Psi_i$, $i=1,\ldots,4$, be strictly local (in the bosonic or fermionic sense) field operators, localized in the double cone $\C_R$. Then Assumption \ref{cluster} is satisfied for $d\geq R$, with $c_1=8$, $M=3$, $\vep=2$ and $c_3=0$.
\end{pr}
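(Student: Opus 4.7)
The plan is to rewrite $K$ as a truncated vacuum two-point function of two bounded, strictly local operators and then invoke the cluster estimate of Araki, Hepp and Ruelle~\cite{ar62}. Using $E_\W^\bot=\1-|\W\>\<\W|$ together with $U(-y)\W=\W$, one has
\begin{equation*}
K(y_1,y_2,y)=(\W,B_{12}(y_1)U(-y)B_{34}(y_2)\W)-(\W,B_{12}(y_1)\W)(\W,B_{34}(y_2)\W),
\end{equation*}
so $K$ is the truncated vacuum correlation of $B_{12}(y_1)$ and the translate $U(-y)B_{34}(y_2)U(y)$. For $\Psi_i$ strictly local in $\C_R$ and $|y_k|\leq d$, each $B_{ij}(y_k)=[\Psi_i(y_k/2),\Psi_j(-y_k/2)]_\pm$ is bounded in norm by $2\|\Psi_i\|\,\|\Psi_j\|$ and localized in $\C_R+y_k/2\,\cup\,\C_R-y_k/2$, a region contained in an origin-centered double cone of radius at most $R+d\leq 2d$ (here the hypothesis $d\geq R$ enters).

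Under the premise $|\vec{y}|\geq|y^0|+8d$, the translated region $\C_{2d}-y$ is spacelike separated from $\C_{2d}$, with spacelike gap bounded below by $|\vec{y}|-|y^0|-4d\geq\tfrac{1}{2}(|\vec{y}|-|y^0|)$. I then apply the AHR estimate for the truncated vacuum two-point function of bounded local observables, which, using only locality and the spectral condition $\mathrm{sp}\,U\subseteq\clc$, yields polynomial decay of the correlation at rate $(|\vec{y}|-|y^0|)^{-2}$ in $3+1$ dimensions, multiplied by a prefactor depending polynomially on the diameter of the common localization region. For the standard smearing that enters the AHR derivation, whose spatial support scales with that diameter, the prefactor grows as its cube, yielding the factor $d^3$. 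This gives the claimed exponents $M=3$ and $\vep=2$; since the resulting bound is purely polynomial, with no additive fast-decaying remainder to truncate, one has $c_3=0$.

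The only nontrivial step, beyond routine rearrangement, is the careful tracking of the AHR constants in their dependence on the diameter $O(d)$, to confirm that the prefactor is indeed at most $\con\,d^3$. This is a bookkeeping exercise in the original derivation: smearing functions whose spatial supports scale with $d$ contribute $d^3$ through their $L^1$-norms, while the denominator's replacement of the effective gap $|\vec{y}|-|y^0|-4d$ by $|\vec{y}|-|y^0|$ costs only a multiplicative constant, which is why $c_1=8$ suffices.
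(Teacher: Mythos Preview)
Your approach is essentially identical to the paper's: both arguments reduce $K$ to the truncated two-point function of the (anti-)commutators $B_{12}(y_1)$ and $B_{34}(y_2)$, observe that each is localized in a double cone of radius at most $2d$ (using $d\geq R$), and then invoke the Araki--Hepp--Ruelle estimate with total radius $r=4d$, whence the condition $|\vec{y}|\geq|y^0|+2r$ gives $c_1=8$, the prefactor $r^3$ gives $M=3$, and the $(|\vec{y}|-|y^0|)^{-2}$ decay gives $\vep=2$ with no remainder term.

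One minor remark: where you attribute the $d^3$ factor heuristically to ``$L^1$-norms of smearing functions'' in the AHR derivation, the paper simply quotes the AHR theorem in the explicit form containing $r^3$ and a constant $C=C_1+C_0/(2r)$ built from $\|B_i\W\|$, $\|B_i^*\W\|$ and $\|\p_0 B_i\W\|$; since these norms are bounded uniformly in $y_1,y_2$ (the $B_{ij}(y_k)$ being unitary translates of a fixed family of bounded operators), the bookkeeping you allude to is immediate once that form of the theorem is in hand.
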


We note that another form of cluster property (due to Buchholz \cite{bu77}) was used in the Haag-Ruelle theory discussion by Dybalski (\cite{dy05}, Lemma 3.1). However, that result is not sharp enough for our purpose.

We begin by stating the original result (\cite{ar62}, formula (3.4)\footnote{There is a misprint in this formula in the original article: constants $C_0$ and $C_1$ should exchange their places, as is clear from the derivation (and for dimensional reasons).}) in the following form. We denote $\p_0B=\p_0B(x)_{|x=0}$.
\begin{thm}
Let $B_1$ $B_2$ be local operators, localized in the double cones $\C_{r_1}$, $\C_{r_2}$, respectively. Then for $|\vec{y}|\geq|y^0|+r$, $r=r_1+r_2$, the following estimate holds
\begin{equation}
 |(\W,B_1E_\W^\bot U(y)B_2\W)|\leq\frac{\con\ r^3}{(|\vec{y}|-r)^2-|y^0|^2}\Big[C_1+C_0\frac{|y^0|}{(|\vec{y}|-r)^2-|y^0|^2}\Big]\,,
\end{equation}
where $\con$ is a universal constant, $C_1=\|\p_0B_1\W\|\|B^*_2\W\|+\|\p_0B_2\W\|\|B^*_1\W\|$ and $C_0=\|B_1\W\|\|B^*_2\W\|+\|B_2\W\|\|B^*_1\W\|$.
\end{thm}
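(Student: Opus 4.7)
The plan is to identify $K$ as a vacuum two-point function of two (anti-)commutators of local fields and apply the Araki--Hepp--Ruelle estimate quoted immediately above the proposition.

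Set $B_1=B_{12}(y_1)=[\Psi_1(y_1/2),\Psi_2(-y_1/2)]_\pm$ and $B_2=B_{34}(y_2)$. Strict locality of $\Psi_i$ in $\C_R$ implies that $\Psi_i(\pm y_1/2)$ is localized in $\C_R\pm y_1/2\subseteq\C_{R+|y_1|/2}$, so $B_1$ is localized in a double cone of radius $r_1\leq R+|y_1|/2\leq 3d/2$ (using $|y_1|\leq d$ and $d\geq R$); similarly $r_2\leq 3d/2$, so $r=r_1+r_2\leq 3d$. From \eqref{4psi}, $K(y_1,y_2,y)=(\W,B_1 E_\W^\bot U(-y)B_2\W)$, to which the AHR theorem applies directly since its bound depends only on $|y^0|$ and $|\vec{y}|$.

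I now translate the hypothesis $|\vec{y}|\geq|y^0|+8d$ into lower bounds on the AHR denominators. With $r\leq 3d$, $|\vec{y}|-r-|y^0|\geq(|\vec{y}|-|y^0|)-3d\geq\tfrac{5}{8}(|\vec{y}|-|y^0|)$ and $|\vec{y}|-r+|y^0|\geq|\vec{y}|-r-|y^0|$, so $(|\vec{y}|-r)^2-|y^0|^2\geq\tfrac{25}{64}(|\vec{y}|-|y^0|)^2$. The $C_1$ contribution in AHR is then bounded by $\con\,C_1\,d^3/(|\vec{y}|-|y^0|)^2$. For the $C_0$ contribution I use $|y^0|/[(|\vec{y}|-r)^2-|y^0|^2]\leq 1/(|\vec{y}|-r-|y^0|)\leq 1/(5d)$, producing a term of order $\con\,C_0\,d^2/(|\vec{y}|-|y^0|)^2$ that absorbs into the first because $d\geq R$. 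This yields the announced form with $M=3$, $\vep=2$, $c_3=0$; the value $c_1=8$ is exactly what both keeps $|\vec{y}|-r-|y^0|$ a definite fraction of $|\vec{y}|-|y^0|$ and swallows the lower-order $|y^0|$-weighted piece.

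It remains to verify that $C_0,C_1$ are bounded uniformly in $y_1,y_2$ so they may be absorbed into $c_2$. The estimate $\|B_{ij}(z)^\sharp\W\|\leq 2\|\Psi_i\|\,\|\Psi_j\|$ is immediate. For $\|\p_0 B_{ij}(z)\W\|=\|P^0 B_{ij}(z)\W\|$ I rewrite $B_{ij}(z)\W=U(z/2)[\Psi_i,\Psi_j(-z)]_\pm\W$ using $U(-z/2)\W=\W$, commute $P^0$ through $U(z/2)$, and expand $P^0[\Psi_i,\Psi_j(-z)]_\pm\W$ by Leibniz exploiting $P^0\W=0$; all residual $z$-dependence then sits inside the unitary $U(-z)$ and drops out upon taking norms, leaving a bound in terms of $\|\Psi_a\|$, $\|\Psi_a\W\|$ and $\|\p_0\Psi_a\W\|$ factors. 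Adjoints are handled by closure of $\K_\pm$ under $*$. The main step requiring care is precisely this uniform control of $\|\p_0 B_{ij}(z)\W\|$ in $z$; the rest is a mechanical substitution into the AHR inequality with the constants arranged to produce the stated exponents.
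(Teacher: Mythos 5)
You have not proved the statement in question; you have assumed it. The statement is the Araki--Hepp--Ruelle estimate itself: a bound on $|(\W,B_1E_\W^\bot U(y)B_2\W)|$ for arbitrary local operators $B_1$, $B_2$, with the specific structure $\con\,r^3\big[(|\vec{y}|-r)^2-|y^0|^2\big]^{-1}\big[C_1+C_0|y^0|\big((|\vec{y}|-r)^2-|y^0|^2\big)^{-1}\big]$. Your proposal opens by \emph{applying} ``the Araki--Hepp--Ruelle estimate quoted immediately above the proposition,'' i.e.\ it takes as input exactly the inequality to be established, and then uses it to derive a downstream result --- what you actually prove is Proposition \ref{decay} (that Assumption \ref{cluster} holds for strictly local fields with $c_1=8$, $M=3$, $\vep=2$, $c_3=0$), which is a different statement of the paper. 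Relative to the assigned statement your argument is circular, and its genuine content --- the spacelike decay of the truncated two-point function, which in \cite{ar62} is extracted from locality together with the energy-momentum spectrum condition (analyticity of the commutator function in spacelike directions, leading to the $r^3$ volume factor and the splitting into the $C_1$ term and the $|y^0|$-weighted $C_0$ term) --- is nowhere addressed. For the record, the paper does not reprove the theorem either: it states it as formula (3.4) of \cite{ar62}, correcting only a misprint (the constants $C_0$ and $C_1$ are interchanged in the original), so a blind proof would have had to reconstruct the AHR derivation.

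What you did write closely parallels the paper's own treatment of the two results that \emph{follow} the theorem: your geometric bookkeeping ($r\leq 3d$, $|\vec{y}|-r-|y^0|\geq\tfrac{5}{8}(|\vec{y}|-|y^0|)$, absorbing the $C_0$ piece via $|y^0|\big[(|\vec{y}|-r)^2-|y^0|^2\big]^{-1}\leq(5d)^{-1}$) is essentially the paper's Corollary (stated for $|\vec{y}|\geq|y^0|+2r$, with $C=C_1+C_0/2r$) combined with its proof of Proposition \ref{decay} at $r=4d$. Even there, two details need repair: the double cone localizing $B_{12}(y_1)$ has radius $R+(|y_1^0|+|\vec{y}_1|)/2$, not $R+|y_1|/2$ with the Euclidean norm (harmless, constants only); more seriously, your uniformity argument for $C_1$ fails as stated, because in the Leibniz expansion of $P^0[\Psi_i,\Psi_j(-z)]_\pm\W$ the cross term $[P^0,\Psi_i]\,U(-z)\Psi_j\W$ has its $z$-dependence trapped between the generally unbounded operator $[P^0,\Psi_i]$ and the vector $\Psi_j\W$, so the unitary does not ``drop out upon taking norms'' and the term is not controlled by $\|\Psi_a\|$, $\|\Psi_a\W\|$, $\|\p_0\Psi_a\W\|$ alone.
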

\begin{col}
If in the above theorem $|\vec{y}|\geq|y^0|+2r$, then
\begin{equation}\label{ahrcol}
 |(\W,B_1E_\W^\bot U(y)B_2\W)|\leq \con\, C\frac{r^3}{(|\vec{y}|-|y^0|)^2}\,,
\end{equation}
where $C=C_1+C_0/2r$.
\end{col}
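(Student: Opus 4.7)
The plan is to derive this corollary directly from the theorem by algebraic reduction, using the strengthened hypothesis $|\vec{y}|\geq|y^0|+2r$ (equivalently $d:=|\vec{y}|-|y^0|\geq 2r$) to control both denominators in terms of the single quantity $d$.

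First I would factor $(|\vec{y}|-r)^2-|y^0|^2 = (|\vec{y}|-r-|y^0|)(|\vec{y}|-r+|y^0|)$ and establish two elementary estimates. The first factor equals $d-r\geq d/2$ at once. For the second factor, $(|\vec{y}|-r+|y^0|)\geq d/2$ follows by rearranging $(1/2)|\vec{y}|+(3/2)|y^0|-r\geq 0$ and substituting $|\vec{y}|\geq 2r+|y^0|$. Multiplying these gives the clean bound $(|\vec{y}|-r)^2-|y^0|^2 \geq d^2/4$, which immediately controls the $C_1$-term of the theorem by $4C_1 r^3/d^2$.

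The $C_0$-term carries an extra factor $|y^0|$ in the numerator which must be absorbed, since $|y^0|$ is otherwise unrestricted. The trick I would use is the elementary identity $|\vec{y}|-r+|y^0| = (|\vec{y}|-r-|y^0|) + 2|y^0|$ combined with AM-GM, giving $(|\vec{y}|-r+|y^0|)^2\geq 8|y^0|(|\vec{y}|-r-|y^0|)$. Together with the two bounds $(|\vec{y}|-r-|y^0|)\geq r$ and $(|\vec{y}|-r-|y^0|)\geq d/2$ this yields
\[
 \big((|\vec{y}|-r)^2-|y^0|^2\big)^2 \geq 8r|y^0|\cdot(|\vec{y}|-r-|y^0|)^2 \geq 2r|y^0|d^2,
\]
so $|y^0|/((|\vec{y}|-r)^2-|y^0|^2)^2 \leq 1/(2rd^2)$, and the $C_0$-contribution is bounded by $(C_0/2r)\,r^3/d^2$. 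Summing both contributions and grouping by the definition $C = C_1 + C_0/(2r)$ reproduces the stated bound up to an absolute constant.

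No serious obstacle is expected; the only point needing some care is the AM-GM split that cancels the potentially large factor $|y^0|$ against the fourth power of the denominator while preserving the $d^{-2}$ decay. Without that cancellation one would get a weaker estimate scaling like $|y^0|/d^4$, which is not uniform in $|y^0|$ and would fail to match the form of the corollary.
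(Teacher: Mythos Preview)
Your argument is correct and follows the same overall strategy as the paper: factor $D:=(|\vec{y}|-r)^2-|y^0|^2=(|\vec{y}|-r-|y^0|)(|\vec{y}|-r+|y^0|)$, use $|\vec{y}|-r-|y^0|\geq d/2$ to get $D\geq d^2/4$ for the $C_1$-term, and then find an additional lower bound on $D$ to absorb the factor $|y^0|$ in the $C_0$-term.

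The only difference is in how the $C_0$-term is handled. Your route via the identity $|\vec{y}|-r+|y^0|=(|\vec{y}|-r-|y^0|)+2|y^0|$ and AM--GM is valid but more involved than necessary. The paper simply observes that the hypothesis $|\vec{y}|\geq|y^0|+2r$ gives $|\vec{y}|-r\geq|y^0|+r$, hence $(|\vec{y}|-r)^2\geq(|y^0|+r)^2\geq|y^0|^2+2r|y^0|$, i.e.\ $D\geq 2r|y^0|$. This immediately yields $|y^0|/D\leq 1/(2r)$, and combining with $1/D\leq 4/d^2$ finishes the $C_0$-term in one line. Your AM--GM argument ultimately recovers the same inequality $D^2\geq 2r|y^0|d^2$, just through a longer detour; both give the same final bound with the same constant $C=C_1+C_0/(2r)$ up to an absolute numerical factor.
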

\begin{proof}
In this case there is  $|\vec{y}|-r-|y^0|\geq(|\vec{y}|-|y^0|)/2$, so $(|\vec{y}|-r)^2-|y^0|^2$ is bounded from below by $(|\vec{y}|-|y^0|)^2/4$ and also by $2r|y^0|$, which implies the result.
\end{proof}

\noindent
{\bf Remark}\ \
In fact in this case a sharper form of the estimate \eqref{ahrcol}, with $\con\,r^3/|y^2|$ on the rhs, is also valid. However, we deliberately use the weaker form for generalization in the Haag-Ruelle-type construction, Section \ref{hrtheory}.

\begin{proof}[Proof of Proposition \ref{decay}]
We use the formula \eqref{4psi}. It is a simple geometric fact that in the present case $B_{12}(x_1-x_2)$ and $B_{34}(x_3-x_4)$ are localized in $\C_{R'}$ and $\C_{R''}$ respectively, with $R'=R+(|x_1^0-x_2^0|+|\vec{x}_1-\vec{x}_2|)/2\leq 2d$ and $R''=R+(|x_3^0-x_4^0|+|\vec{x}_3-\vec{x}_4|)/2\leq 2d$ (for the bounds the assumptions of the Proposition were used). The use of Corollary with $r=4d$ gives now immediately the thesis.
\end{proof}

\frenchspacing

\end{document}